\newcommand{\VersionString}{}
\newcommand{\Version}[1]{\renewcommand{\VersionString}{Version #1}}
\newtheorem{theorem}{Theorem}[section]
\numberwithin{equation}{section}
\numberwithin{figure}{section}
\newtheorem{lemma}[theorem]{Lemma}
\newtheorem{proposition}[theorem]{Proposition}
\newtheorem{claim}[theorem]{Claim}
\theoremstyle{definition}
\newtheorem{definition}[theorem]{Definition}
\newtheorem{assumption}[theorem]{Assumption}
\newtheorem{mechanism}[theorem]{Mechanism}
\newcommand{\ignore}[1]{}
\newcommand{\tc}[1]{\todo[color=green!40,size=\small]{Tal: #1}\xspace}
\newcommand{\ic}[1]{\todo[color=blue!40,size=\small]{Ian: #1}\xspace}
\newcommand{\ici}[1]{\todo[inline,color=blue!40,size=\small]{Ian: #1}\xspace}
\newcommand{\svc}[1]{\todo[color=yellow!40,size=\small]{Salil: #1}\xspace}
\newcommand{\svci}[1]{\todo[inline,color=yellow!40,size=\small]{Salil: #1}\xspace}
\newcommand{\yc}[1]{\todo[color=red!30,size=\small]{Yiling: #1}\xspace}
\newcommand{\yci}[1]{\todo[inline,color=red!30,size=\small]{Yiling: #1}\xspace}
\newcommand{\eps}{\ensuremath{\epsilon}\xspace}
\newcommand{\abs}[1]{\left |{#1} \right |}
\newcommand{\pr}[2][]{\Pr_{#1}\left [ {#2} \right ]}
\newcommand{\defeq}{\ensuremath{\stackrel{def}{=}}}
\newcommand{\remove}[1]{}
\newcommand{\ar}{\rightarrow}
\newcommand{\R}{\mathbb{R}}
\newcommand{\Z}{\mathbb{Z}}
\newcommand{\N}{\mathbb{N}}
\newcommand{\Rand}{\mathcal{R}}
\newcommand{\Exp}{\mathop{\mathrm{E}}}
\newcommand{\cU}{\mathcal{U}}
\newcommand{\KL}{\mathit{KL}}
\newcommand{\Maj}{\mathrm{Maj}}
\newcommand{\Med}{\mathrm{Med}}
\newcommand{\argmax}{\mathop{\mathrm{argmax}}}
\newenvironment{squishenum}{
   \begin{enumerate}
    \setlength{\parskip}{0pt}
    \setlength{\itemsep}{0pt}      \setlength{\parsep}{0pt}
      \setlength{\topsep}{0pt}       \setlength{\partopsep}{0pt}
      \setlength{\leftmargin}{1.5em} \setlength{\labelwidth}{1em}
      \setlength{\labelsep}{0.5em}} {\end{enumerate}}
\newcommand{\squishlist}{
   \begin{list}{$\bullet$}
    { \setlength{\itemsep}{0pt}      \setlength{\parsep}{3pt}
      \setlength{\topsep}{3pt}       \setlength{\partopsep}{0pt}
      \setlength{\leftmargin}{1.5em} \setlength{\labelwidth}{1em}
      \setlength{\labelsep}{0.5em} } }
\newcommand{\squishlisttwo}{
   \begin{list}{$\bullet$}
    { \setlength{\itemsep}{0pt}    \setlength{\parsep}{0pt}
      \setlength{\topsep}{0pt}     \setlength{\partopsep}{0pt}
      \setlength{\leftmargin}{2em} \setlength{\labelwidth}{1.5em}
      \setlength{\labelsep}{0.5em} } }
\newcommand{\squishend}{
    \end{list}  }
\newcommand{\SD}{\ensuremath{\textrm{SD}}\xspace}
\newcommand{\MD}{\ensuremath{\textrm{MD}}\xspace}
\newcommand{\M}{\ensuremath{\mathcal{M}}\xspace}
\newcommand{\Uo}[1][]{\ensuremath{U^{out}_{#1}}\xspace}
\newcommand{\Up}[1][]{\ensuremath{U^{priv}_{#1}}\xspace}
\newcommand{\Vp}[1][]{\ensuremath{V^{priv}_{#1}}\xspace}
\begin{document}

\markboth{Y. Chen et al.}{Truthful Mechanisms for Agents that Value Privacy}

\begin{titlepage}
\title{Truthful Mechanisms for Agents that Value Privacy\footnote{Work begun when all the authors were
at the Harvard Center for Computation and Society, supported in part by a gift from Google, Inc. and by NSF Grant CCF-0915016.}
}

\author{Yiling Chen\thanks{Center for Research on Computation and Society and School of Engineering and Applied Sciences, Harvard University, 33 Oxford Street, Cambridge, MA.  E-mail: \texttt{yiling@seas.harvard.edu}.}
\and
Stephen Chong\thanks{Center for Research on Computation and Society and School of Engineering and Applied Sciences, Harvard University, 33 Oxford Street, Cambridge, MA.  E-mail: \texttt{chong@seas.harvard.edu}. Supported by
NSF Grant No. 1054172.}
\and
Ian A. Kash\thanks{Microsoft Research Cambridge, 7 J J Thomson Ave, Cambridge CB3 0FB, UK.  E-mail: \texttt{iankash@microsoft.com}.}
\and
Tal Moran\thanks{Efi Arazi School of Computer Science, IDC Herzliya.  Email: \texttt{talm@idc.ac.il}.}
\and
Salil Vadhan\thanks{Center for Research on Computation and Society and School of Engineering and Applied Sciences, Harvard University, 33 Oxford Street, Cambridge, MA.  E-mail: \texttt{salil@seas.harvard.edu}}
}

\maketitle

\svci{others need to add affiliations/grants}
\begin{abstract}
Recent work has constructed economic mechanisms that are both truthful and differentially private. In these mechanisms, privacy is treated separately from the truthfulness; it is not incorporated in players' utility functions (and doing so 
has been shown to lead to non-truthfulness in some cases). 
In this work, we propose a new, general way of modelling privacy in players' utility functions.  Specifically, we only assume that if an outcome $o$ has the property that any report of player $i$ would have led to $o$ with approximately the same probability, then
$o$ has small privacy cost to player $i$. We give three mechanisms that are truthful with respect to our modelling of privacy:
for an election between two candidates, for a discrete version of the facility location problem, and for a general social choice problem with
discrete utilities (via a VCG-like mechanism).\svc{changed description of VCG result to mention discreteness}  As the number $n$ of players increases, the social welfare achieved by our mechanisms approaches optimal (as a fraction
of $n$).

\ignore{
A recent paper of Xiao (Cryptology ePrint Technical Report, May 2011) constructs economic
mechanisms that are simultaneously truthful and differentially private, improving previous results
of McSherry and Talwar (FOCS 2007) and Nissim, Smorodinsky, and Tennenholtz (CoRR, April 2010 and ITCS 2012).
Xiao's paper also argues that this conjunction of truthfulness and differential privacy may not be sufficient to elicit truthful behavior from player
 that value privacy. Specifically, he gives an example of a mechanism that is truthful and differentially private, but where truthfulness is lost if one includes a particular measure of privacy cost in the players' utility functions (namely, mutual information between the player's type and the outcome).

In this paper:
\begin{itemize}
\item We propose a new, more general way of modelling privacy in players' utility functions.  Specifically, we only assume that if an outcome $o$ has the property that any report of player $i$ would have led to $o$ with approximately the same probability, then
$o$ has small privacy cost to player $i$.

\item We give three mechanisms that are truthful with respect to our modelling of privacy:
for an election between two candidates, for a discrete version of the facility location problem, and for a general social choice problem with
discrete utilities (via a VCG-like mechanism).\svc{changed description of VCG result to mention discreteness}  As the number $n$ of players increases, the social welfare achieved by our mechanisms approaches optimal (as a fraction
of $n$). 
\end{itemize}
Independently, Nissim, Orlandi, and Smorodinsky [arXiv, Nov. 2011] have considered a related way of modelling overall
(rather than per-outcome) privacy cost in players' utilities and constructed truthful mechanisms under their model for
settings where players can receive a direct benefit for reporting truthfully (beyond how this affects the outcome of the
mechanism).\svc{should we keep this sentence?}
}
\end{abstract}

%
%
%
%
%
%
%
%

\vfill

\textbf{Keywords:} differential privacy, mechanism design, truthfulness, elections, VCG
\thispagestyle{empty} 
\end{titlepage}

\Version{$ $Id$ $}
\section{Introduction}

In this paper, we
examine the interaction between mechanism design and differential privacy.\svc{changed first sentence... previous one
``we continue a recent line of work...'' sounded too derivative to be the first sentence of the paper}
In particular, we explicitly model privacy in players' utility functions and design truthful mechanisms with respect to it.
Our work is motivated by considerations in both fields.
\svci{below I assume that the reader already knows what is ``mechanism design'' and what is ``differential privacy,''
at least at an intuitive level.  I think this is relatively safe for a STOC submission.  But we may consider whether to begin each paragraph with a one or two sentence description of each area}

In mechanism design,
it has long been recognized that players may not behave as predicted due to traditional incentives analysis
out of concerns for privacy:\svc{is there anything we can cite for this? ideally some classic econ paper...}
in addition to having preferences about the outcome of a mechanism (e.g., who wins an auction,
or where a hospital is located), they may also be concerned about what others learn about their private information (e.g., how much they value the auctioned good, or whether they have some medical condition that makes them care more about the hospital's location).  The latter concerns are not modelled in most works on mechanism design, and it is natural to try to bring the new models and techniques of differential privacy to bear on them.

Differential privacy~\cite{DwMcNiSm} is a notion developed to capture privacy when performing statistical analyses of databases.
Informally, a randomized algorithm is differentially private if changing a single individual's data does not ``substantially'' change the output distribution of the algorithm.  Thus, differential privacy is not an absolute notion, but rather
a quantitative one that needs to be weighed against other objectives.  Indeed, differentially private algorithms typically offer a tradeoff between the level of privacy offered to individuals in a database  and the accuracy of statistics computed on the database, which we can think of as a ``global'' objective to be optimized.  However, it is also of interest to consider how privacy should be weighed against the objectives of the individuals themselves.  Mechanism design provides a natural setting in which to consider such tradeoffs.  Attempting to model and reason about privacy in the context of mechanism design seems likely to lead to an improved understanding about the meaning and value of privacy.

\subsection{Previous Work}

The first work bringing together differential privacy and mechanism design was by McSherry and Talwar~\cite{McSherryTa07}.  They showed how to use differential privacy as a {\em tool} for mechanism design.
By definition, differentially private algorithms are insensitive to individuals' inputs; a change in a single individual's
input has only a small effect on the output distribution of the algorithm.  Thus, if a mechanism is differentially
private (and players have bounded utility functions), it immediately follows that the mechanism is {\em approximately truthful}.  That is, reporting untruthfully can only provide a small gain in a player's utility.  With this observation, McSherry and Talwar
showed how tools from differential privacy allow construction of approximately truthful mechanisms for many problems, including ones where exact truthfulness is impossible.

However, as pointed out by Nissim, Smorodinsky, and Tennenholz~\cite{NissimSmTe10}, the approximate truthfulness
achieved by McSherry and Talwar~\cite{McSherryTa07} may not be a satisfactory solution concept. While differential
privacy can guarantee that a player will gain arbitrarily little by lying, it also makes the potential gain
from telling the truth equally small.
Thus players may choose to lie in order to protect their privacy.
Even worse,  as shown by an example in Nissim~et al.~\cite{NissimSmTe10}, in some cases
misreporting is a dominant strategy of the game. Thus, it is difficult to predict the outcome and ``global'' objectives such as social welfare of differentially private mechanisms. Motivated by this, Nissim~et al.~\cite{NissimSmTe10}  show how to
modify some of the mechanisms of McSherry and Talwar~\cite{McSherryTa07} to provide exact truthfulness.  In doing so,
they sacrifice differential privacy.

A recent paper by Xiao~\cite{Xiao11} shows how to remedy this deficiency and construct mechanisms that simultaneously achieve exact truthfulness and
differential privacy.  Xiao's paper also points out that even this combination may not be sufficient for getting players that
value privacy to report truthfully.  Indeed, exact truthfulness only means that a player {\em weakly} prefers to tell the truth.
Lying might not reduce the player's utility at all (and differential privacy implies that it can only reduce the player's utility by at most a small amount).  On the other hand, differential privacy does not guarantee ``perfect'' privacy protection, so it is possible that a player's concern for privacy may still outweigh the small or zero benefit from being truthful.

To address this, Xiao~\cite{Xiao11} advocated incorporating privacy directly into the players' utility functions, and
seeking mechanisms that are truthful when taking the combined utilities into account.  He proposed to measure privacy cost
as the the mutual information between a player's type (assumed to come from some prior distribution) and the outcome of the
mechanism.%
\footnote{Subsequent to our work, Xiao has revised his model to use a different, prior-free measure of privacy.}
Using this measure, he showed that his mechanism does not remain truthful when incorporating privacy into the
utility functions, and left as an open problem to construct mechanisms that do.

\subsection{Our Contributions}\svc{changed ``Results'' to ``Contributions'', since modelling is not a ``result''}

In this paper, we propose a new, more general way of modelling privacy in players' utility functions.  Unlike Xiao's mutual
information measure, our model does not require assuming a prior on players' types, and is instead a pointwise model: we simply assume that if an outcome $o$ has the property that any report of player $i$ would have led to $o$ with approximately the same probability, then
$o$ has small privacy cost to player $i$.   One motivation for this assumption is that such an outcome $o$ will induce
only a small change in a Bayesian adversary's beliefs about player $i$ (conditioned on the other players' reports).
(This is inspired by a Bayesian interpretation of differential privacy, due to Dwork and McSherry and described in \cite{KasiviswanathanSm08}.)
While Xiao's mutual information measure is not strictly a special case of our model, we show (in the appendix) that truthfulness with respect to our modelling implies truthfulness with respect to Xiao's.

We give three mechanisms that are truthful with respect to our model of privacy:
one for an election between two candidates, one for a discrete version of the facility location problem, and one for general social choice problems with
discrete utilities (via a VCG-like mechanism).
As the number $n$ of players increases, the social welfare achieved by our mechanisms approaches optimal (as a fraction
of $n$).

      Our mechanisms
are inspired by Xiao's mechanisms, but with some variations and new analyses to obtain truthfulness when taking privacy into account.  For the election and facility location mechanisms, we can establish {\em universal truthfulness}---truthfulness for every choice of the mechanism's random coins.   For our VCG-like mechanism for general social choice problems, we need to work a bit harder to also ensure that the payments requested do not compromise privacy, and this leads us to only achieve truthfulness in expectation.
In a nutshell, our proofs of universal truthfulness consider two cases for every fixing of the player's reports and coin tosses of the mechanism: If a player misreporting does not affect the outcome of the mechanism, then that player is completely indifferent between truth-telling and misreporting, even taking privacy into account.
On the other hand, if the player misreporting does change the outcome of the mechanism, then being truthful provides a noticeable gain in utility (for the mechanisms we consider)
while differential privacy ensures that the privacy cost of the outcome is still small.  Thus, this
analysis allows us to argue that the benefit of truthfulness outweighs privacy cost even when a player
has a tiny probability of affecting the outcome (e.g., in a highly skewed election using a majority vote
with random noise).  Indeed, our key observation is that the expected privacy cost is also tiny in such case.
\svci{mention individual rationality here?  thought it might be too distracting in the intro}

Unlike previous works, we do not treat differential privacy as an end in itself but rather as a means to incentivize truthfulness from agents that value privacy.  Thus, we do not necessarily need to set the differential privacy parameter $\eps$ to be very small (corresponding to very high privacy, but a larger price in social welfare); we only need to set it small enough so that the privacy costs are outweighed by the agents' preferences for outcomes.
Specifically, our analysis shows that as we decrease $\eps$, agents' ability to affect the outcome falls, but their expected privacy cost falls even faster.  Thus, it is natural to conclude (as we do) that there is some value of $\eps$ (which may be large if the agents care much more about the outcome than their privacy) at which the privacy cost is small enough relative to the benefit that agents are willing to report truthfully.
Moreover,
by
taking agents' value for privacy into account in the incentives analysis, we can have greater confidence that the
agents will actually report truthfully and achieve the approximately optimal social welfare our analysis predicts.

\subsection{Other Related Work}
Independently of our work, 
Nissim, Orlandi, and Smorodinsky~\cite{NiOrSm11privacy} have considered a related way of modelling privacy in players' utilities and constructed truthful mechanisms under their model.  They assume that if {\em all} outcomes $o$ have the property that no player's report affects the probability of $o$ much (i.e., the mechanism is differentially private), then the {\em overall}
privacy cost of the mechanism is small for every player.  This is weaker than our assumption, which requires an analogous bound on the privacy cost
for each specific outcome $o$.  Indeed, Nissim et al.~\cite{NiOrSm11privacy} do not consider a per-outcome model of privacy, and thus do not obtain a reduced privacy cost when a player has a very low probability of affecting the outcome (e.g., a highly skewed election).
However, they require assumptions that give the mechanism an ability to reward players for truthfulness (through their concept of ``agents' reactions,'' which can be restricted by the mechanism).  For example, in the case of an election or poll between two choices (also considered in their paper), they require that a player {\em directly} benefits from reporting their true choice (e.g., in a poll to determine which of two magazines is more popular, a player will receive a copy of whichever magazine she votes for, providing her with positive utility at no privacy cost), whereas we consider a more standard election where the players only receive utility for their preferred candidate winning (minus any costs due to privacy). In general, we consider the standard mechanism design setting where truthfulness is only rewarded through the public outcome (and possibly payments), and this brings out the central tension that our mechanisms need to reconcile: we can only incentivize truthfulness by giving players an influence on the outcome, but such an influence also leads to privacy costs, which may incentivize lying.

Another recent paper that considers a combination of differential privacy and mechanism design is that of Ghosh and
Roth~\cite{GhoshRo11}.  They consider a setting where each player has some private information and some value for its privacy (measured in a way related to differential privacy).  The goal is to design
a mechanism for a data analyst to compute a statistic of the players' private information as accurately as possible, by
purchasing data from many players and then performing a differentially private computation.
In their model, players may lie about their value for privacy, but they cannot provide false data to the analyst.
So they design mechanisms that get players to truthfully report their value for privacy.
In contrast, we consider settings where players may lie about their data (their private types), but where they have a
direct interest in the outcome of the mechanism, which we use to outweigh their value for privacy (so we do not need to explicitly elicit their value for privacy).

Subsequent to our work, Huang and Kannan~\cite{HuangKa} examined the properties of the exponential mechanism~\cite{McSherryTa07}, which can be thought of as noisy version of VCG that is slightly different from the one we study.  They showed that, with appropriate payments, this mechanism is truthful, individually rational, approximately efficient, and differentially private, but their model does not incorporate privacy costs into players' utility functions.

We remark that there have also been a number of works that consider secure-computation-like notions of privacy for
mechanism design problems (see~\cite{NPS99privacy,DoHaRa00crypto-game,IzMiLe05rational,PRST08auctions,BrSa08auctions,FJS10privacy} for some examples).\tc{Why is Dolev-Dwork-Naor relevant?}\svc{I thought one of the motivations for non-malleability was to allow auctions without a secure communication channel to the auctioneer.  but feel free to omit if you think it's
too much of a stretch}
In these works, the goal is to ensure that a distributed implementation of a mechanism does not leak much more information than a centralized implementation by a trusted third party
In our
setting, we assume we have a trusted third party to implement the mechanism and are concerned with the information leaked by the outcome itself.

\svci{other works we should discuss/cite?  from the econ literature?}

\svci{add an Organization section?}

\Version{$ $Id$ $}
\section{Background on Mechanism Design}\label{sec:background}

In this section, we introduce the standard framework of mechanism design to lay the ground for modelling privacy in the context of mechanism design in next section. We use a running example of an election between two candidates.
A (deterministic) mechanism is given by the following components: 
\squishlist
\item A number $n$ of players.  These might be the $n$ voters in an election between two candidates $A$ and $B$.
\item A set $\Theta$ of player types.  In the election example, we take $\Theta=\{A,B\}$, where $\theta_i\in \Theta$ indicates which of the two candidates is preferred by voter $i\in [n]$.
\item A set $O$ of outcomes.  In the election example, we take $O=\{A,B\}$, where the outcome indicates which of the two candidates win. (Note that we do not include the tally of the vote as part of the outcome.  This turns out to be significant for privacy.)
\item Players' action spaces $X_i$ for all $i \in [n]$. In general, a player's action space can be different from his type space. However, in this paper we view the types in $\Theta$ to be values that we expect players to know and report. Hence, we require $X_i = \Theta$ for all $i \in [n]$ (i.e., we restrict to direct revelation mechanisms, which is without loss of generality).  In the election example, the action of a player is to vote for $A$ or for $B$.
\item An outcome function $\M: X_1 \times \dots \times X_n \rightarrow O$ that determines an outcome given players' actions. Since we require $X_i = \Theta$, the outcome function becomes $\M: \Theta^n \rightarrow O$. For example, a majority voting mechanism's function maps votes of players to the candidate who received a majority of votes.
\item Player-specific {\em utility functions}
$U_i :\Theta \times  O \ar \mathbb{R}$ for $i=1,\ldots,n$, giving the utility of player $i$ as a function of his type and the outcome.
\squishend
%

To simplify notation, we use a mechanism's outcome function to represent the mechanism. That is, a mechanism is denoted $\M: \Theta^n \rightarrow O$.
The goal of mechanism design is then to design a {\em mechanism} $\M : \Theta^n \rightarrow O$
that takes players' (reported) types and selects an outcome so as to maximize some global objective function (e.g.
the sum of the players' utilities, known as {\em social welfare}) even when players may falsely report their type in order to increase their personal utility.  The possibility of players' misreporting is typically handled by designing mechanisms that
are {\em incentive-compatible}, i.e., it is in each player's interest to report their type honestly.  A strong formulation of
incentive compatibility is the notion of {\em truthfulness} (a.k.a. dominant-strategy incentive compatibility):
for all
players $i$, all  types $\theta_i\in \Theta$, all alternative reports $\theta_i'\in \Theta$, and all profiles $\theta_{-i}$ of
the other players' reports\footnote{We adopt the standard game-theory convention that $\theta_{-i}$ refers to all components of the vector $\theta$ except the one corresponding to player $i$, and that $(\theta_i,\theta_{-i})$ denotes the vector obtained by putting $\theta_i$ in the $i$'th component and using $\theta_{-i}$ for the rest.}, we have:
\begin{equation} \label{eqn:truthfulness}
U_i(\theta_i,\M(\theta_i,\theta_{-i})) \geq U_i(\theta_i,\M(\theta'_i,\theta_{-i})).
\end{equation}
If Inequality (\ref{eqn:truthfulness}) holds for player $i$ (but not necessarily all players), we say that
the mechanism is {\em truthful for player $i$}.
Note that we are using $\theta_{-i}$ here as both the type and the report of other players.  Since truthfulness must hold for all possible reports of other players, it is without loss of generality to assume that other players report their true type.
This is in contrast to the notion of a Nash equilibrium which refers to the incentives of player $i$ under the assumption that other players are using equilibrium strategies. 

In the election example, it is easy to see that standard majority voting is a truthful mechanism.  Changing one's vote
to a less-preferred candidate can never increase one's utility (it either does not affect the outcome, or does so in a way
that results in lower utility). 

In this paper, we will allow randomized mechanisms, which we define as
$\M : \Theta^n\times \Rand \rightarrow O$, where $\Rand$ is the probability space from which the mechanism makes its
random choices (e.g., all possible sequences of coin tosses used by the mechanism). 
We write $\M(\theta)$ to denote the random variable obtained by sampling $r$ from $\Rand$ and evaluating $\M(\theta;r)$.
This (non-standard) definition of a randomized mechanism is equivalent to the standard one (where the mechanism is a function from reported types to a distribution over outcomes) and makes our analysis clearer.

For randomized mechanisms, one natural generalization of
truthfulness is {\em truthfulness in expectation}:
for all
players $i$, all types $\theta_i$, all utility functions $U_i$, all reports $\theta_i'$, and all profiles $\theta_{-i}$ of
the other players' reports, we have:
\[\Exp[U_i(\theta_i,\M(\theta_i,\theta_{-i}))] \geq \Exp[U_i(\theta_i,\M(\theta'_i,\theta_{-i}))],
\]
where the expectation is taken over the random choices of the mechanism.

A stronger notion is that of {\em universal truthfulness}:
for all
players $i$, all types $\theta_i$ and utility functions $U_i$, all alternative reports $\theta_i'$, and all profiles $\theta_{-i}$ of
the other players' reports, and all $r\in \Rand$, we have:
\[U_i(\theta_i,\M(\theta_i,\theta_{-i};r)) \geq U_i(\theta_i, \M(\theta'_i,\theta_{-i};r)).
\]
Thus $\M$ being universally truthful is equivalent to saying that for every $r\in \Rand$, $\M(\cdot;r)$ is a deterministic
truthful mechanism.

\Version{$ $Id$ $}
\section{Modelling Privacy in Mechanism Design}\label{sec:modelprivacy}

The standard framework of mechanism design does not consider a player's value of privacy. In this section, we incorporate privacy into mechanism design and adapt the definitions of truthfulness accordingly.
%
%
We continue considering the basic mechanism-design setting from Section~\ref{sec:background}.
However, players now care not only about the outcome of the mechanism, but also what that outcome reveals about their private types.  Thus, a player's utility becomes
\begin{equation}
\label{eqn:quasilinear}
U_i = \Uo[i]+\Up[i],
\end{equation}
where $\Uo[i] : \Theta \times O\rightarrow \R$ is player $i$'s utility for the outcome
and $\Up[i]$ is player $i$'s utility associated with privacy or information leakage.  
Before discussing the form of $\Up$ (i.e., what are its inputs), we note that in
Equation~(\ref{eqn:quasilinear}), there is already an implicit assumption that privacy can be measured in units that
can be linearly traded with other forms of utility.  A more general formulation would allow $U_i$ to be an arbitrary monotone
function of $\Uo[i]$ and $\Up[i]$, but we make the standard quasi-linearity assumption for simplicity.

Now, we turn to functional form of $\Up[i]$.  First, we note that $\Up[i]$ should not just be a function of player $i$'s type and the outcome.  What matters is the {\em functional relationship} between player $i$'s reported type and the outcome.  For example, a voting mechanism that ignores player $i$'s vote should have zero privacy cost to player $i$, but one that uses player $i$'s vote to entirely determine the outcome may have a large privacy cost.  So we will allow $\Up[i]$ to depend on the mechanism itself,
as well as the reports of other players, since these are what determine the functional relationship between player $i$'s report and the outcome:
\begin{equation}
\Up[i] : \Theta \times  O \times \{\M : \Theta^n\times \Rand \rightarrow O\} \times \Theta^{n-1} \rightarrow \R.
\end{equation}

Thus, when the reports of the $n$ players are $\theta'\in \Theta^n$ and the outcome is $o$, the
utility of player $i$ is $$U_i(\theta_i,o,\M,\theta'_{-i})=\Uo[i](\theta_i,o)+\Up[i](\theta_i,o,\M,\theta'_{-i}).$$
 \yc{I changed the report to $\theta'$ to differentiate the reported type and true type. Someone may want to check to make sure that I got it correct.}
 \svc{It looks correct to me, but if we use $\theta'_{-i}$ here, then shouldn't we make analogous changes many places below, eg in Assumption 3.1, Definition 3.2, Definition 3.4?  Those might get hard to read with so many primed variables (then we'll need to use double and triple primes for alternative reports of player $i$)}
\ic{I left this alone for now, but think we can revert it if we want to / have time}
In particular, $U_i$ has the same inputs as $\Up$ above, including $\M$.
Unlike standard mechanism design, we are not given fixed utility functions and then need to design a mechanism with respect to those utility functions.  Our choice of mechanism
affects the utility functions too!

Note that we do not assume that $\Up[i]$ is always negative (in contrast to Xiao~\cite{Xiao11}).  In some cases,
players may prefer for information about them to be kept secret and in other cases they may prefer for it to be leaked (e.g., in case it is flattering).  Thus, $\Up[i]$ may be better thought of as ``informational utility'' rather than a ``privacy cost''.


It is significant that we do not allow the $\Up[i]$ to depend on the {\em report} or, more generally, the {\em strategy} of player $i$.  This is again in contrast to Xiao's modelling of privacy~\cite{Xiao11}.  We will discuss the motivation for our choice in Section~\ref{sec:discussion}, and also show that despite this difference, truthfulness with respect to our modelling implies truthfulness with respect to Xiao's modelling (Appendix~\ref{sec:xiao}).

Clearly no mechanism design would be possible if we make no further assumptions about the $\Up[i]$'s and allow them to be
arbitrary, unknown functions (as their behavior could completely cancel the $\Uo[i]$'s).  Thus, we will make the natural assumption that $\Up[i]$ is small if player $i$'s report has little influence on the outcome $o$.
More precisely:
\begin{assumption}[privacy-value assumption]\svc{added subscript of $i$ to $\theta'_i$ and $\theta''_i$ here and in Def 3.3}
\label{ass:generic-privacy}
\begin{equation*}
\forall \theta \in\Theta^n,o\in O,\M : \abs{\Up[i](\theta_i,o,\M,\theta_{-i})} \le
  F_i\left(\max_{\theta'_i,\theta''_i\in\Theta}
  \frac{\pr{\M(\theta'_i,\theta_{-i})=o}}{{\pr{\M(\theta''_i,\theta_{-i})=o}}}\right)\ ,
\end{equation*}
where $F_i : [1,\infty)\rightarrow [0,\infty]$ is a {\em privacy-bound}\svc{better name?} function with the property that $F_i(x)\rightarrow 0$ as $x\rightarrow 1$, and the probabilities are taken over the random choices of $\M$.\svc{we have a choice of whether
to put $\theta_{-i}$ in the $\forall$ or the $\max$.  I don't have a strong preference, but I found it easier to motivate
as above.}
\end{assumption}

Note that if the mechanism ignores player $i$'s report, then the right-hand side of (\ref{ass:generic-privacy}) is $F_i(1)$, which naturally corresponds to a privacy cost of 0.  Thus, we are assuming that the privacy costs satisfy a continuity condition as the mechanism's dependence on player $i$'s
report decreases.   The privacy-bound function $F_i$ could be the same for all players, but we allow it
to depend on the player for generality.

Assumption~(\ref{ass:generic-privacy}) is inspired by the notion of {\em differential privacy}, which is due to Dinur and Nissim~\cite{DinurNi}, Dwork and Nissim~\cite{DworkNi}, Blum~et al.~\cite{BlumDwMcNi}, and Dwork~et~al.~\cite{DwMcNiSm}. We restate it in our
notation:
\begin{definition}\label{def:dp}
A mechanism $\M : \Theta^n \times \Rand \rightarrow O$ is {\em $\eps$-differentially private} iff
$$\forall \theta_{-i}\in\Theta^{n-1},o\in O
\qquad \max_{\theta_i',\theta_i''\in\Theta}\frac{\pr{\M(\theta_i',\theta_{-i})=o}}{{\pr{\M(\theta_i'',\theta_{-i})=o}}}
\leq e^{\eps}.$$
\end{definition}

By inspection of Assumption~(\ref{ass:generic-privacy}) and the definition of differential privacy, we have the following result.
\begin{proposition} \label{prop:dp}
If $\M$ is $\eps$-differentially private, then for all players $i$ whose utility functions satisfy Assumption~(\ref{ass:generic-privacy}), all $\theta_{-i}\in \Theta^{n-1}$, and $o\in O$, we have
$\abs{\Up[i](\theta_i,o,\M,\theta_{-i})} \le F_i(e^\eps).$
\end{proposition}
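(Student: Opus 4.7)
The plan is to observe that this proposition is essentially an immediate chaining of Assumption~\ref{ass:generic-privacy} with Definition~\ref{def:dp}, with the only subtlety being that we need $F_i$ to be monotone non-decreasing in order to translate an upper bound on its argument into an upper bound on its value. I would first flag (or assume) this monotonicity, since the informal description of $F_i$ as a ``privacy-bound function'' that tends to $0$ as the max-ratio tends to $1$ strongly suggests that a larger ratio should give at worst a larger bound; this is the natural reading of the assumption.

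Given that, the proof is a two-line chain. Fix $i$, $\theta = (\theta_i,\theta_{-i}) \in \Theta^n$, and $o \in O$. By $\eps$-differential privacy of $\M$ (Definition~\ref{def:dp}), applied to this choice of $\theta_{-i}$ and $o$, I would write
\[
\max_{\theta_i',\theta_i''\in\Theta}\frac{\pr{\M(\theta_i',\theta_{-i})=o}}{\pr{\M(\theta_i'',\theta_{-i})=o}} \;\le\; e^{\eps}.
\]
Then, applying the (non-decreasing) function $F_i$ to both sides and combining with Assumption~\ref{ass:generic-privacy} evaluated at the same $\theta$, $o$, and $\M$, I would conclude
\[
\abs{\Up[i](\theta_i,o,\M,\theta_{-i})} \;\le\; F_i\!\left(\max_{\theta_i',\theta_i''}\frac{\pr{\M(\theta_i',\theta_{-i})=o}}{\pr{\M(\theta_i'',\theta_{-i})=o}}\right) \;\le\; F_i(e^{\eps}),
\]
which is exactly the conclusion.

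The only potential obstacle is the implicit monotonicity of $F_i$: if the paper intends $F_i$ to be arbitrary subject only to $F_i(x)\to 0$ as $x\to 1$, then the statement of the proposition would have to be read as $F_i$ evaluated at (or upper-bounded through) the worst ratio in $[1,e^\eps]$, i.e.\ $\sup_{1\le x \le e^\eps} F_i(x)$, rather than at $e^\eps$ itself. Since the proposition writes $F_i(e^\eps)$, the natural and almost certainly intended reading is that $F_i$ is monotone non-decreasing, so I would either note this explicitly or strengthen Assumption~\ref{ass:generic-privacy} to include monotonicity. Beyond that point, there is no computational content to carry out; the proposition is a direct corollary.
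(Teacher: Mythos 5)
Your proof is correct and matches the paper's intent exactly: the paper justifies this proposition only with the phrase ``By inspection of Assumption~(\ref{ass:generic-privacy}) and the definition of differential privacy,'' which is precisely the two-step chain you wrote out. Your flag about the implicit monotonicity of $F_i$ is a legitimate observation: the paper's Assumption~\ref{ass:generic-privacy} only requires $F_i(x)\to 0$ as $x\to 1$, yet the conclusion $\abs{\Up[i]}\le F_i(e^\eps)$ (rather than $\sup_{1\le x\le e^\eps}F_i(x)$) does tacitly use that $F_i$ is non-decreasing, so it would be cleaner if the paper stated monotonicity explicitly.
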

In particular, as we take $\eps\rightarrow 0$, the privacy cost of any given outcome tends to 0.

Like differential privacy, Assumption~(\ref{ass:generic-privacy}) makes sense only for randomized mechanisms,
and only measures the loss in privacy contributed by
Player $i$'s report when fixing the reports of the other players.  In some cases, it may be that the other players' reports already reveal
a lot of information about player $i$.  See Section~\ref{sec:discussion} for further discussion, interpretation, and critiques of our modelling.
%
With this model, the definitions of truthfulness with privacy are direct analogues of the basic definitions given earlier.

\begin{definition}[truthfulness with privacy]
Consider a mechanism design problem
with $n$ players, type space $\Theta$, and outcome space $O$.  For a player $i$ with utility function $U_i = \Uo[i]+\Up[i]$, we say that a randomized mechanism $\M : \Theta^n \times \Rand \rightarrow O$ is
{\em truthful in expectation for player $i$}
if for all types $\theta_i\in \Theta_i$, all alternative reports $\theta_i'\in \Theta$ for player $i$, and all possible profiles $\theta_{-i}$ of
the other players' reports, we have:
\[\Exp[U_i(\theta_i,\M(\theta_i,\theta_{-i}),\M,\theta_{-i})] \geq \Exp[U_i(\theta_i,\M(\theta'_i,\theta_{-i}),\M,\theta_{-i})].
\]
We say that $\M$ is {\em universally truthful for player $i$}
if the inequality further holds for all values of $r \in \Rand$:
\[U_i(\theta_i,\M(\theta_i,\theta_{-i};r),\M,\theta_{-i}) \geq U_i(\theta_i,\M(\theta'_i,\theta_{-i}; r),\M,\theta_{-i}).
\]
\end{definition}
Note that, unlike in standard settings, $\M$ being universally truthful does {\em not} mean
that the deterministic mechanisms $\M(\cdot;r)$ are truthful.  Indeed, even when we fix $r$, the privacy utility $\Up[i](\theta,o,\M,\theta_{-i})$ still depends on the original randomized function $\M$, and the privacy properties of $\M$ would be lost
if we publicly revealed $r$.  What universal truthfulness means is that player $i$ would still want to report truthfully even if she knew $r$ but it were kept secret from the rest of the world.

\section{Private Two-Candidate Elections}

Using Proposition~\ref{prop:dp}, we will sometimes be able to obtain truthful mechanisms taking privacy into account by applying tools from differential
privacy to mechanisms that are already truthful when ignoring privacy.
In this section, we give an illustration of this approach in our example of a two-candidate election.


\begin{mechanism}\label{mech:election}
Differentially private election mechanism

\noindent Input: profile $\theta\in \{A,B\}^n$ of votes, privacy parameter $\eps>0$.
\begin{squishenum}
\item Choose $r\in \Z$ from a discrete Laplace distribution, namely $\Pr[r=k] \propto \exp(-\eps|k|)$.\svc{the mechanism would
be more symmetric if we chose $r$ to be an odd multiple of $1/2$, so that we don't advantage $A$ in case of a tie... maybe something to do in a future version.}
\item If $\#\{ i : \theta_i = A\} - \#\{ i : \theta_i = B\} \geq r$, output $A$.  Otherwise output $B$.
\end{squishenum}
\end{mechanism}

We show that for sufficiently small $\eps$, this mechanism is truthful for players satisfying
Assumption~\ref{ass:generic-privacy}:
\begin{theorem} \label{thm:voting}
Mechanism~\ref{mech:election} is universally truthful
for player $i$ provided that, for some function $F_i$:
\begin{squishenum}
\item Player $i$'s privacy utility $\Up[i]$ satisfies Assumption~\ref{ass:generic-privacy} with privacy bound function $F_i$, and
\item $\Uo[i](\theta_i,\theta_i)-\Uo[i](\theta_i,\neg \theta_i) \geq 2 F_i(e^\eps),$ \label{cond:UtilityVsPrivacy}
\end{squishenum}
\end{theorem}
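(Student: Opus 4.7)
My plan is to establish universal truthfulness by fixing the random string $r$ and performing a simple case analysis on whether player $i$'s report is pivotal at that $r$. The first step is to verify that Mechanism~\ref{mech:election} is $\eps$-differentially private. This follows by the standard discrete-Laplace argument: for any $\theta_{-i}$ and any two reports $\theta_i',\theta_i''$, the count difference $\#A-\#B$ changes by a bounded amount, and the noise weight $\exp(-\eps|k|)$ then controls the ratio $\pr{\M(\theta_i',\theta_{-i})=o}/\pr{\M(\theta_i'',\theta_{-i})=o}$ by $e^\eps$. Applying Proposition~\ref{prop:dp} then gives $|\Up[i](\theta_i,o,\M,\theta_{-i})|\le F_i(e^\eps)$ for every $\theta_i$, $o$, and $\theta_{-i}$.

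Next I would fix $r\in\Z$, $\theta_{-i}\in\Theta^{n-1}$, and reports $\theta_i,\theta_i'\in\{A,B\}$, and split on the realized outcomes. If $\M(\theta_i,\theta_{-i};r)=\M(\theta_i',\theta_{-i};r)$, the universal-truthfulness inequality holds with equality and there is nothing to prove. If the outputs differ, the key structural observation is that the \emph{truthful} report must be the one producing player $i$'s preferred candidate: setting $c=\#\{j\ne i:\theta_j=A\}-\#\{j\ne i:\theta_j=B\}$, one checks that $\M(A,\theta_{-i};r)=A$ iff $r\le c+1$, while $\M(B,\theta_{-i};r)=A$ iff $r\le c-1$, so the outputs disagree only when $r\in\{c,c+1\}$, and in that regime the output literally equals whichever of $A,B$ player $i$ reports.

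In the pivotal case, then, the outcome-utility gain from truthtelling is exactly $\Uo[i](\theta_i,\theta_i)-\Uo[i](\theta_i,\neg\theta_i)\ge 2F_i(e^\eps)$ by Condition~\ref{cond:UtilityVsPrivacy}, while the privacy-utility swing is at most $|\Up[i](\theta_i,\theta_i,\M,\theta_{-i})|+|\Up[i](\theta_i,\neg\theta_i,\M,\theta_{-i})|\le 2F_i(e^\eps)$ by the first step, crucially noting that both $\Up[i]$ terms depend on the full randomized $\M$ and on $\theta_{-i}$ but not on player $i$'s report, so the differential-privacy bound applies even though $r$ is fixed. The outcome-utility gain therefore weakly dominates, yielding the universal-truthfulness inequality in the pivotal case as well. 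The main obstacle I anticipate is the first step: because the count difference has sensitivity $2$, some care is needed with the constant in the discrete-Laplace ratio bound to ensure the privacy parameter in Proposition~\ref{prop:dp} lines up with the $e^\eps$ used in Condition~\ref{cond:UtilityVsPrivacy}; the remaining casework is purely mechanical.
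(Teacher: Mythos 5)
Your proof takes exactly the paper's approach: fix $r$, observe that if the outcome is unchanged both sides of the truthfulness inequality vanish, and if it changes then the outcome must equal the report (so truthtelling yields $\theta_i$), giving an outcome-utility gain of $\Uo[i](\theta_i,\theta_i)-\Uo[i](\theta_i,\neg\theta_i)\geq 2F_i(e^\eps)$ that dominates the privacy swing $\leq 2F_i(e^\eps)$ coming from Proposition~\ref{prop:dp}. The concern you flag at the end is real, and it is actually a gap in the paper rather than in your argument: because $\#\{i:\theta_i=A\}-\#\{i:\theta_i=B\}$ has sensitivity $2$, Mechanism~\ref{mech:election} with noise $\propto\exp(-\eps|k|)$ is $2\eps$-differentially private, not $\eps$-DP, so Condition~\ref{cond:UtilityVsPrivacy} should really read $\geq 2F_i(e^{2\eps})$ (or the noise should be scaled to $\exp(-\eps|k|/2)$); the paper never verifies the DP claim for this mechanism and silently absorbs the factor. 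This is only a constant-factor discrepancy and does not affect the qualitative statement, but you were right to be cautious about it.
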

Note that Condition~\ref{cond:UtilityVsPrivacy} holds for sufficiently small $\epsilon>0$ (since $F_i(x)\rightarrow 0$ as $x\rightarrow 1$).  The setting of $\eps$ needed to achieve truthfulness depends only on how much the players value their preferred candidate
(measured by the left-hand side of Condition~\ref{cond:UtilityVsPrivacy}) and how much they value privacy (measured by the right-hand side of Condition~\ref{cond:UtilityVsPrivacy}), and is independent of the number of players $n$.

\begin{proof}
Fix the actual type $\theta_i\in \{A,B\}$ of player $i$, a profile $\theta_{-i}$ of reports of the
other players, and a choice $r$ for $\M$'s randomness.  The only alternate report for player $i$ we need to consider is
$\theta_i' = \neg \theta_i$. Let $o=\M(\theta_i,\theta_{-i};r)$ and $o' = \M(\neg \theta_i,\theta_{-i};r)$.
We need to show that
$U_i(\theta_i,o,\M,\theta_{-i}) \geq U_i(\theta_i,o',\M,\theta_{-i})$, or equivalently
\begin{equation} \label{eqn:deviation}
\Uo[i](\theta_i,o)-\Uo[i](\theta_i,o') \geq
\Up[i](\theta_i,o',\M,\theta_{-i})-\Up[i](\theta_i,o,\M,\theta_{-i}).
\end{equation}
We consider two cases:
\begin{description}
\setlength{\itemsep}{0pt}
\item[Case 1: $o=o'$]  In this case, Inequality~(\ref{eqn:deviation}) holds because both the left-hand and right-hand sides
are zero.

\item[Case 2: $o\neq o'$]  This implies that $o=\theta_i$ and $o'=\neg\theta_i$.  (If player $i$'s report has any effect
on the
outcome of the differentially private voting mechanism, then it must be that the outcome equals player $i$'s report.)
Thus the left-hand side of Inequality~(\ref{eqn:deviation}) equals $\Uo[i](\theta_i,\theta_i)-\Uo[i](\theta_i,\neg \theta_i)$. By
Proposition~\ref{prop:dp}, the right-hand side of Inequality (\ref{eqn:deviation}) is at most $2F(e^\eps)$.  Thus,
Inequality (\ref{eqn:deviation}) holds by hypothesis. \qedhere
\end{description}
\end{proof}


Of course, truthfulness is not the only property of interest.  After all, a mechanism that is simply a constant function is (weakly) truthful.  Another property we would like is economic {\em efficiency}.
Typically, this is defined as maximizing social welfare, the sum of players' utilities.  Here we consider the sum of {\em outcome} utilities
for simplicity.
As is standard, we normalize players' utilities so that all players are counted equally in measuring the social welfare.  In our voting example, we
wish to maximize the number of voters' whose preferred candidates win, which is equivalent to normalizing the
left-hand side of Condition~\ref{cond:UtilityVsPrivacy} in Theorem~\ref{thm:voting} to 1.  Standard, deterministic majority voting clearly maximizes this
measure of social welfare.   Our mechanism achieves approximate efficiency:

\svci{added expected social welfare to prop and tightened proof.  would be good for someone to check}
\begin{proposition} \label{prop:voting-efficiency}
For every profile $\theta\in \Theta^n$ of reports, if we select $o\leftarrow \M(\theta)$ using Mechanism~\ref{mech:election}, then:
\begin{squishenum}
\item $\Pr\left[\#\{ i : \theta_i=o\} \leq \max_{o'\in \{A,B\}} \#\{ i : \theta_i = o'\} - \Delta\right] < e^{-\eps\Delta}$.

\item $\Exp\left[\#\{ i : \theta_i=o\}\right] > \max_{o'\in \{A,B\}} \#\{ i : \theta_i = o'\} - 1/\eps$.
\end{squishenum}
\end{proposition}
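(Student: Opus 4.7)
\textbf{The plan} is to reduce everything to the tail behavior of the discrete Laplace random variable $r$. Let $a=\#\{i:\theta_i=A\}$ and $b=\#\{i:\theta_i=B\}$, and, without loss of generality, assume $a\geq b$; write $d = a - b \geq 0$ so that $\max_{o'}\#\{i:\theta_i=o'\} = a$. By the decision rule of Mechanism~\ref{mech:election}, the output equals $A$ iff $r\leq d$, so the social-welfare loss $L \defeq \max_{o'}\#\{i:\theta_i=o'\} - \#\{i:\theta_i=o\}$ equals $0$ when $r\leq d$ and exactly $d$ when $r\geq d+1$. (In the tie case $a=b$ we have $d=0$ and $L=0$ deterministically, so the proposition is trivial there.)

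First I would normalize the discrete Laplace to $\Pr[r=k] = \frac{1-e^{-\eps}}{1+e^{-\eps}}\,e^{-\eps|k|}$ and sum the geometric upper tail:
\begin{equation*}
\Pr[r \geq d+1] \;=\; \frac{e^{-\eps(d+1)}}{1+e^{-\eps}} \;<\; e^{-\eps d},
\end{equation*}
using the trivial $\frac{e^{-\eps}}{1+e^{-\eps}} < 1$. For part~1, note that $\{L \geq \Delta\}$ is empty when $\Delta > d$, and otherwise coincides with $\{r\geq d+1\}$; plugging $d\geq \Delta$ into the tail bound immediately gives $\Pr[L\geq \Delta] < e^{-\eps\Delta}$.

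For part~2, the cleanest route is the standard tail-sum identity for nonnegative integer-valued random variables, combined with the bound just proved:
\begin{equation*}
\Exp[L] \;=\; \sum_{\Delta=1}^{\infty}\Pr[L\geq \Delta] \;\leq\; \sum_{\Delta=1}^{\infty} e^{-\eps\Delta} \;=\; \frac{1}{e^\eps - 1} \;<\; \frac{1}{\eps},
\end{equation*}
where the final inequality is the strict form of $e^\eps \geq 1+\eps$. Rearranging yields $\Exp[\#\{i:\theta_i=o\}] > a - 1/\eps$.

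There is no real obstacle; the argument reduces to summing one geometric series and invoking $e^\eps - 1 > \eps$. The only points that require attention are verifying strictness in both parts (coming respectively from $\frac{e^{-\eps}}{1+e^{-\eps}}<1$ and from $e^\eps > 1+\eps$) and dismissing the boundary case $a=b$. If one preferred an explicit computation for part~2 rather than the tail-sum, the same bound would follow from $\Exp[L] = d\cdot \Pr[r\geq d+1]$ and maximizing $d\,e^{-\eps d}$ in $d$, which peaks at $d=1/\eps$ with value $1/(e\eps) < 1/\eps$.
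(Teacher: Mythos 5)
Your argument is correct, and part~1 follows essentially the same path as the paper's: isolate the event that the noise crosses the majority margin and sum the geometric tail of the discrete Laplace. One minor imprecision worth flagging: ``WLOG $a\geq b$'' is not quite a pure relabeling, because the tie-breaking rule treats $A$ and $B$ asymmetrically. When $a\geq b$ the loss event is $\{r\geq d+1\}$, but when $a<b$ (writing $d=b-a$) the loss event is $\{r\geq d\}$, so the two cases need slightly different thresholds. This does not break anything---$\Pr[r\geq d] = e^{-\eps d}/(1+e^{-\eps}) < e^{-\eps d}$ holds just as well---but a careful write-up should either cover both cases explicitly (as the paper does, by taking the weaker common bound $\Pr[r\geq\Delta']$) or note why the threshold shift is harmless.

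For part~2 your route is genuinely different from the paper's. The paper computes $\Exp[L]$ directly as $\Delta'\cdot\Pr[\M(\theta)=\neg o^*]$ and then bounds $\Delta'\,e^{-\eps\Delta'}$ by its maximum $1/(e\eps)$ over $\Delta'$ (the paper says ``minimized'' where it means ``maximized''---a typo). You instead invoke the tail-sum identity for nonnegative integer random variables and sum the geometric series $\sum_{\Delta\geq 1}e^{-\eps\Delta}=1/(e^\eps-1)<1/\eps$. Both are one-line finishes. The paper's calculus step gives the numerically tighter constant ($\approx 1/(2e\eps)$ for small $\eps$ versus your $\approx 1/\eps$), while the tail-sum route avoids any optimization and is arguably cleaner to verify; you also already identified the paper's approach as an alternative at the end, so you have both in hand.
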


\begin{proof}
The maximum number of voters will be satisfied by taking the majority candidate $o^* = \Maj(\theta)$, where we break ties in
favor of $A$.  Let $\Delta' = \#\{ i : \theta_i=o^*\}-\#\{ i : \theta_i=\neg o^*\}$.
If $o^*=A$, then
$\neg o^* = B$ is selected iff the noise $r$ is larger than $\Delta'$.  If $o^*=B$, then
$\neg o^* = A$ is selected iff the noise $r$ is smaller than or equal to $-\Delta'$.  Since $r$ is chosen so that $\Pr[r=k] \propto e^{-\eps|k|}$, the probability of selecting $\neg o^*$ in either case
is bounded as:
$$\Pr[\M(\theta)=\neg o^*] \leq \frac{\sum_{k\geq \Delta'} e^{-\eps k}}{\sum_{k\in \Z} e^{-\eps |k|}} = \frac{e^{-\eps\Delta'}}{1+e^{-\eps}}
\leq e^{-\eps\Delta'}.$$
Now the high probability bound follows by considering the case that $\Delta'\geq \Delta$ (otherwise the event occurs
with probability 0).  The expectation bound is computed as follows:
\begin{eqnarray*}
&\Exp\left[\max_{o'\in \{A,B\}} \#\{ i : \theta_i = o'\}-\#\{ i : \theta_i=\M(\theta)\}\right]\\
&= \Pr[\M(\theta)=\neg o^*]\cdot \Delta'
\leq  \Delta'\cdot \frac{e^{-\eps\Delta'}}{1+e^{-\eps}}
\leq \frac{1}{e\eps}\cdot\frac{1}{1+e^{-\eps}}
< \frac{1}{\eps},
\end{eqnarray*}
where the second-to-last inequality follows from the fact that
$xe^{-\eps x}$ is minimized at $x=1/\eps$.
\end{proof}

\svci{following two pars are new}
Thus, the number of voters whose preferred candidate wins is within $O(1/\eps)$ of optimal, in expectation
and with high probability.
This deviation is independent of $n$, the number of players.  Thus if we take $\eps$ to be a
constant (as suffices for
truthfulness) and let $n\rightarrow \infty$, the economic efficiency approaches optimal, when we consider both as fractions of $n$.  This also holds for
vanishing $\eps=\eps(n)$, provided $\eps=\omega(1/n)$. 
Despite the notation, $\epsilon$ need not be tiny, which allows a tradeoff between efficiency and privacy.

This analysis considers the social welfare as a sum of outcome utilities (again normalizing so that everyone values their preferred candidate by one unit of utility more than the other candidate).  We can consider the effect of privacy utilities on the social welfare too.  By Proposition~\ref{prop:dp}, the privacy utilities affect the social welfare by at most $\sum_i F_i(e^\eps)$, assuming player $i$ satisfies Assumption~\ref{ass:generic-privacy} with privacy bound function $F_i$. If all players satisfy Assumption~\ref{ass:generic-privacy} with the same privacy bound function $F_i=F$, then the effect on social welfare
is at most $n\cdot F(e^\eps)$.  By taking $\eps\rightarrow 0$ (e.g., $\eps=1/\sqrt{n}$), the privacy utilities contribute a vanishing fraction of $n$.

Another desirable property is {\em individual rationality}: players given the additional option of not participating should still prefer to participate and report truthfully.  This property follows from the same argument we used to establish universal truthfulness.  By dropping out, the only change in outcome
that player $i$ can create is to make her less preferred candidate win.  Thus, the same argument as in Theorem~\ref{thm:voting} shows that player $i$ prefers truthful participation to dropping out.

\begin{proposition}
Under the same assumptions as Theorem~\ref{thm:voting}, Mechanism~\ref{mech:election} is individually rational for player $i$.
\end{proposition}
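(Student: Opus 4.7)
The plan is to mirror the two-case argument from the proof of Theorem~\ref{thm:voting}, with ``drop out'' in place of ``misreport.'' Fix player $i$'s true type $\theta_i$, the other players' reports $\theta_{-i}$, and the randomness $r$. Let $o=\M(\theta_i,\theta_{-i};r)$ denote the outcome when player $i$ participates truthfully, and let $o''$ denote the outcome when player $i$ drops out; we model dropping out as running the same mechanism on only the remaining $n-1$ reports, with the same coins $r$.

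The key structural step is the analogue of the observation used in Theorem~\ref{thm:voting}: dropping out shifts the tally $\#\{j:\theta_j=A\}-\#\{j:\theta_j=B\}$ by one in the direction opposite to $\theta_i$ (versus a shift of two from misreporting). Hence only two outcome patterns are possible: either $o=o''$, or $o=\theta_i$ and $o''=\neg\theta_i$. The mirror case, in which dropping out would make the more preferred candidate win, is impossible because the tally shift is in the wrong direction.

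In Case 1 ($o=o''$) the outcome utilities coincide, and the drop-out privacy utility is bounded by $F_i(1)=0$ via Assumption~\ref{ass:generic-privacy} applied to the reduced mechanism on $n-1$ votes (whose dependence on $\theta_i$ is trivial, so the worst-case ratio is $1$), while the truthful privacy utility is controlled in magnitude by $F_i(e^\eps)$ through Proposition~\ref{prop:dp}. Case 2 is the same calculation as in the proof of Theorem~\ref{thm:voting}: the outcome-utility gain $\Uo[i](\theta_i,\theta_i)-\Uo[i](\theta_i,\neg\theta_i)$ is at least $2F_i(e^\eps)$ by Condition~\ref{cond:UtilityVsPrivacy}, which dominates the at-most-$F_i(e^\eps)$ privacy-utility swing between the two alternatives, and so truthful participation strictly dominates dropping out.

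The main obstacle is pinning down a precise model for the ``drop out'' alternative that lets us apply Assumption~\ref{ass:generic-privacy} to both the truthful and the non-participating option on equal footing; once we commit to viewing dropping out as a reduced mechanism on $n-1$ votes and invoke the natural continuity $F_i(1)=0$, the proof becomes a near-verbatim repetition of the case analysis of Theorem~\ref{thm:voting}, with the one-vote-shift observation playing the role of the two-vote-shift observation used there.
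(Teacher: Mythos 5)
Your case split and Case~2 mirror the paper's argument exactly, and the observation that dropping out can only push the outcome toward the less-preferred candidate is the same key point the paper makes. Where you diverge is in the modeling of ``drop out,'' and your choice opens a real gap in Case~1.

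You model non-participation as running a \emph{different} mechanism (the same rule on the remaining $n-1$ reports) and then argue that, because that reduced mechanism ignores $\theta_i$, its privacy utility is bounded by $F_i(1)=0$, while the truthful option's privacy utility is merely bounded in magnitude by $F_i(e^\eps)$. But Assumption~\ref{ass:generic-privacy} only bounds $|\Up[i]|$, and the paper explicitly does \emph{not} assume $\Up[i]\le 0$. So in Case~1, under your model, the truthful option could have privacy utility as low as $-F_i(e^\eps)$ while dropping out gives exactly $0$; with equal outcome utilities, the player would then strictly prefer to drop out, and the inequality you need fails. You correctly flag the modeling question as the crux, but the resolution you commit to is precisely the one that breaks.

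The paper's resolution, made explicit in the facility-location section (and implicit here), is to model non-participation as \emph{reporting an extra type $\bot$} to the \emph{same} (extended) mechanism, which simply does not count $\bot$ in the tally. Crucially, $\Up[i](\theta_i,o,\M,\theta_{-i})$ is a function of player $i$'s true type, the realized outcome, the mechanism, and the \emph{other} players' reports---by design it does not depend on player $i$'s own report. Hence in Case~1, where $o=o''$, the privacy utilities of the truthful option and the $\bot$ option are \emph{literally the same expression}, and both sides of the deviation inequality are zero, exactly as in the proof of Theorem~\ref{thm:voting}. Case~2 then goes through unchanged. With this model the proof really is a verbatim repetition of Theorem~\ref{thm:voting} with $\theta_i'=\bot$, with no need to treat the two options asymmetrically or to invoke $F_i(1)=0$. (One should also check that adding $\bot$ preserves $\eps$-differential privacy, which it does: reporting $\bot$ moves the vote count by one, which is no more than the shift caused by switching candidates, so all the ratios in Definition~\ref{def:dp} remain bounded by $e^\eps$.)
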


\section{Tools for Proving Truthfulness with Privacy}

The analysis of truthfulness in Theorem~\ref{thm:voting} is quite general.  It holds for any differentially
private mechanism with the property that if  a player can
actually change the outcome of the mechanism by reporting untruthfully, then it will have a noticeable negative impact on the player's outcome utility.
We abstract this property for use in analyzing our other mechanisms.
\begin{lemma} \label{lem:universal}
Consider a mechanism design problem
with $n$ players, type space $\Theta$, and outcome space $O$. Let player $i$ have
a utility function $U_i = \Uo[i]+\Up[i]$ satisfying Assumption~\ref{ass:generic-privacy} with privacy bound function $F_i$.
Suppose that randomized mechanism $\M : \Theta^n\rightarrow O$ has the following properties:
\begin{squishenum}
\item $\M$ is $\eps$-differentially private, and
\item For all possible types $\theta_i$, all profiles $\theta_{-i}$
of the other players' reports, all random choices $r$ of $\M$, and all alternative reports $\theta_i'$ for player $i$:
if $\M(\theta_i,\theta_{-i};r)\neq \M(\theta_i',\theta_{-i};r)$, then
$\Uo(\theta_i,\M(\theta_i,\theta_{-i};r)) - \Uo(\theta_i,\M(\theta_i',\theta_{-i};r)) \geq 2F_i(e^\eps)$,
\end{squishenum}
Then $\M$ is universally truthful for player $i$.
\end{lemma}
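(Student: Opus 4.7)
The plan is to abstract the exact two-case argument that was used in the proof of Theorem~\ref{thm:voting}, since the lemma is essentially designed to package that reasoning in a reusable form. Fix player $i$, a true type $\theta_i$, an alternative report $\theta_i'$, a profile $\theta_{-i}$, and coin tosses $r$. Write $o = \M(\theta_i,\theta_{-i};r)$ and $o' = \M(\theta_i',\theta_{-i};r)$. The goal is to show
\[
\Uo[i](\theta_i,o) + \Up[i](\theta_i,o,\M,\theta_{-i}) \;\geq\; \Uo[i](\theta_i,o') + \Up[i](\theta_i,o',\M,\theta_{-i}),
\]
or equivalently that the outcome-utility gain from truth-telling dominates the swing in privacy utility.

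First I would dispose of the trivial case $o = o'$: both sides of the target inequality are literally equal, because $\Up[i](\theta_i,\cdot,\M,\theta_{-i})$ depends on the outcome but not on player $i$'s report (this is precisely the modelling choice made in Section~\ref{sec:modelprivacy}, and it is the feature that makes universal truthfulness meaningful here). Then I would handle the interesting case $o \neq o'$. By hypothesis~(2), the outcome-utility gap satisfies
\[
\Uo[i](\theta_i,o) - \Uo[i](\theta_i,o') \;\geq\; 2 F_i(e^\eps).
\]
On the privacy side, hypothesis~(1) says $\M$ is $\eps$-differentially private, so Proposition~\ref{prop:dp} applies to each outcome separately and gives
\[
|\Up[i](\theta_i,o,\M,\theta_{-i})| \leq F_i(e^\eps), \qquad |\Up[i](\theta_i,o',\M,\theta_{-i})| \leq F_i(e^\eps).
\]
The triangle inequality then bounds the privacy swing $\Up[i](\theta_i,o',\M,\theta_{-i}) - \Up[i](\theta_i,o,\M,\theta_{-i})$ by $2 F_i(e^\eps)$, which is at most the outcome-utility gain. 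Combining the two bounds yields the desired inequality for this fixed $r$, and since $r$ was arbitrary this is exactly universal truthfulness for player~$i$.

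I do not anticipate a real obstacle: the hypotheses were chosen to make each step immediate. The only subtle point worth being careful about is confirming that the privacy utility genuinely does not vary between $o$ and $o'$ through any report-dependent argument of $\Up[i]$ — that is, verifying that the inputs to $\Up[i]$ are $(\theta_i, o, \M, \theta_{-i})$ with $\theta_i$ fixed to the true type (not the report), so only the outcome argument changes between the two sides. Once that is observed, the proof reduces to comparing $2F_i(e^\eps)$ with itself and the lemma falls out.
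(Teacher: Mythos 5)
Your proof is correct and is essentially the paper's own argument: the paper gives no separate proof of Lemma~\ref{lem:universal}, stating explicitly that it abstracts the two-case analysis from the proof of Theorem~\ref{thm:voting}, and your case split (identical outcomes give equality because $\Up[i]$ depends on the outcome and not the report; differing outcomes give a $2F_i(e^\eps)$ outcome-utility gap by hypothesis versus a $2F_i(e^\eps)$ privacy-swing bound by Proposition~\ref{prop:dp}) reproduces that reasoning exactly.
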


Lemma~\ref{lem:universal} implicitly requires that players not be indifferent between outcomes 
(unless there is no way for the player to change one outcome to the other).  A condition like this is necessary
because otherwise players may be able to find reports that have no effect on their outcome utility but improve their privacy utility.  However, we show in Section~\ref{sec:vcg} that in some settings with payments, truthfulness can be achieved even with indifference between outcomes since payments can break ties.

It is also illustrative and useful to consider what happens when we take the expectation over the mechanism's coin tosses.
We can upper-bound the privacy utility as follows:
\begin{lemma} \label{lem:SD}
Consider a mechanism design problem
with $n$ players, type space $\Theta$, and outcome space $O$. Let player $i$ have type $\theta_i\in \Theta_i$ and a
utility function $U_i = \Uo[i]+\Up[i]$ satisfying Assumption~\ref{ass:generic-privacy}.
Suppose that randomized mechanism $\M : \Theta^n \times \Rand \rightarrow O$ is $\eps$-differentially private.
Then
for all possible profiles $\theta_{-i}$
of the other players' reports, all random choices $r$ of $\M$, and all alternative reports $\theta_i'$ for player $i$,
we have
\begin{equation*}
\abs{\Exp[\Up[i](\theta_i, \M(\theta_i,\theta_{-i}),\M,{\theta_{-i}})] -
\Exp[\Up[i](\theta_i, \M(\theta_i',\theta_{-i}),\M,{\theta_{-i}})]}
\leq 2F_i(e^\eps)\cdot\SD(\M(\theta_i,\theta_{-i}),\M(\theta_i',\theta_{-i})),
\end{equation*}
where $\SD$ denotes statistical difference.\footnote{The {\em statistical difference} (aka {\em total variation distance})
between
two discrete random variables $X$ and $Y$ taking values in a universe $\cU$ is defined to be $\SD(X,Y) = \max_{S\subseteq \cU} |\Pr[X\in S]-\Pr[Y\in S]|$.}
\end{lemma}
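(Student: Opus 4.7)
The plan is to reduce the bound to a standard inequality about statistical distance by combining Proposition~\ref{prop:dp} with the identity that $\sum_{o} |\Pr[X=o]-\Pr[Y=o]| = 2\SD(X,Y)$ for discrete random variables.

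First, I would expand both expectations as sums over outcomes and take the difference, writing
\begin{equation*}
\Exp[\Up[i](\theta_i, \M(\theta_i,\theta_{-i}),\M,\theta_{-i})] - \Exp[\Up[i](\theta_i, \M(\theta_i',\theta_{-i}),\M,\theta_{-i})]
= \sum_{o\in O} \bigl(\pr{\M(\theta_i,\theta_{-i})=o} - \pr{\M(\theta_i',\theta_{-i})=o}\bigr)\cdot \Up[i](\theta_i,o,\M,\theta_{-i}).
\end{equation*}
The key observation is that the $\Up[i]$ factor appearing here depends only on $o$ (and on fixed quantities $\theta_i$, $\M$, $\theta_{-i}$), not on which of the two reports produced it, so the same quantity multiplies both probability terms.

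Next, I would apply the triangle inequality and Proposition~\ref{prop:dp}. Since $\M$ is $\eps$-differentially private and $\Up[i]$ satisfies Assumption~\ref{ass:generic-privacy} with bound $F_i$, Proposition~\ref{prop:dp} yields $|\Up[i](\theta_i,o,\M,\theta_{-i})| \leq F_i(e^\eps)$ for every $o\in O$. Pulling this uniform bound out of the sum gives
\begin{equation*}
\bigl|\text{LHS of the bound}\bigr| \;\leq\; F_i(e^\eps)\cdot \sum_{o\in O}\bigl|\pr{\M(\theta_i,\theta_{-i})=o} - \pr{\M(\theta_i',\theta_{-i})=o}\bigr|.
\end{equation*}

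Finally, I would invoke the standard identity that for discrete random variables $X, Y$, the $\ell_1$ distance of their probability mass functions equals $2\,\SD(X,Y)$. Applying this to $X=\M(\theta_i,\theta_{-i})$ and $Y=\M(\theta_i',\theta_{-i})$ yields the claimed bound $2F_i(e^\eps)\cdot\SD(\M(\theta_i,\theta_{-i}),\M(\theta_i',\theta_{-i}))$.

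This proof is essentially a routine calculation with no real obstacle; the only subtlety to highlight is that $\Up[i]$'s fourth argument is the fixed profile $\theta_{-i}$ and its third argument is the fixed randomized mechanism $\M$, so the value $\Up[i](\theta_i,o,\M,\theta_{-i})$ does not itself depend on the randomness in $\M(\theta_i,\theta_{-i})$ or $\M(\theta_i',\theta_{-i})$; this is exactly what lets us factor it out as a bounded coefficient in front of a signed measure whose $\ell_1$ norm is $2\,\SD$.
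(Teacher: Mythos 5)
Your proof is correct and follows essentially the same route as the paper: it bounds $|\Up[i](\theta_i,o,\M,\theta_{-i})|$ by $F_i(e^\eps)$ using Proposition~\ref{prop:dp}, then applies the standard fact that for a function bounded in absolute value by $F_i(e^\eps)$, the difference of expectations under $X$ and $Y$ is at most $2F_i(e^\eps)\cdot\SD(X,Y)$. The paper simply packages the $\ell_1$-distance argument as the abstract statement that $|\Exp[f(X)]-\Exp[f(Y)]|\leq 2\SD(X,Y)$ for any $f:\cU\to[-1,1]$, while you unpack it into an explicit sum plus triangle inequality; the content is the same.
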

\begin{proof}
For every two discrete random variables $X$ and $Y$ taking values in a universe $\cU$, and every function
$f : \cU\rightarrow [-1,1]$, it holds that $|\Exp[f(X)]-\Exp[f(Y)]| \leq 2\SD(X,Y)$.  (The $f$ that maximizes the left-hand side sets $f(x)=1$ when $\Pr[X=x]>\Pr[Y=x]$ and sets $f(x)=-1$ otherwise.)
Take $\cU=O$, $X=\M(\theta_i,\theta_{-i})$, $Y=\M(\theta_i',\theta_{-i})$, and $f(o) = \Up[i](\theta_i,o,\M,{\theta_{-i}})/F_i(e^\eps)$. By Proposition~\ref{prop:dp}, we have $f(o)\in [-1,1]$.
\end{proof}

By this lemma, to establish truthfulness in expectation, it suffices to
show that the expected gain in outcome utility from reporting $\theta_i$ instead of
$\theta_i'$ grows proportionally with
$\SD(\M(\theta_i,\theta_{-i}),\M(\theta_i',\theta_{-i}))$.
(Specifically, it should be at least the statistical difference
times $2F_i(e^\eps)$.)
In Lemma~\ref{lem:universal}, the gain in outcome utility is related to
the statistical difference by coupling the random variables $\M(\theta_i,\theta_{-i})$
and $\M(\theta_i',\theta_{-i})$ according to the random choices $r$ of $\M$.  Indeed,
$\Pr_r[\M(\theta_i,\theta_{-i};r)\neq \M(\theta'_i,\theta_{-i};r)] \geq
\SD(\M(\theta_i,\theta_{-i}),\M(\theta_i',\theta_{-i})).$
Thus, if the outcome-utility gain from truthfulness is larger than $2F_i(e^\eps)$ whenever $\M(\theta_i,\theta_{-i};r)\neq \M(\theta'_i,\theta_{-i};r)$, then we have truthfulness in expectation (indeed, even universal truthfulness).

We note that if $\M$ is differentially private, then
$\SD(\M(\theta_i,\theta_{-i}),\M(\theta_i',\theta_{-i})) \leq e^\eps-1 = O(\eps),$
for small $\eps$.  By Lemma~\ref{lem:SD}, the expected difference in privacy utility between any two reports is
at most $O(F_i(e^\eps)\cdot \eps)$.  Thus, $\eps$-differential privacy helps us twice, once in bounding the pointwise
privacy cost (as $2F_i(e^\eps)$), and second in bounding the statistical difference between outcomes.  On the other hand,
for mechanisms satisfying the conditions of Lemma~\ref{lem:universal}, the differential privacy only affects the expected outcome utility by a factor related to the statistical difference.  This is why, by taking $\eps$ sufficiently small, we
can ensure that the outcome utility of truthfulness dominates the privacy cost.

Lemma~\ref{lem:SD} is related to, indeed inspired by, existing lemmas used to analyze the composition of differentially
private mechanisms.  These lemmas state that while differential privacy guarantees a worst case bound of $\eps$ on the ``privacy loss'' of all possible outputs, this actually implies an {\em expected} privacy loss of $O(\eps^2)$.  Such bounds
correspond to the special case of Lemma~\ref{lem:SD} when $F_i = \ln$ and we replace the statistical difference with
the upper bound $e^\eps-1$.
These $O(\eps^2)$ bounds on expected privacy loss were
proven first in the case of specific mechanisms by Dinur and Nissim~\cite{DinurNi} and Dwork and Nissim~\cite{DworkNi}, and then in the case of arbitrary differentially private mechanisms by Dwork~et al.~\cite{DworkRoVa10}.  In our case, the $O(\eps^2)$ bound does not suffice, and we need the stronger
bound expressed in terms of the statistical difference. Consider the differentially
private election when the vote is highly skewed (e.g., 2/3 vs. 1/3).  Then a player has only an exponentially small probability (over the random choice $r$ of the mechanism) of affecting the outcome, and so the expected outcome utility for voting truthfully is exponentially small.  On the other hand, by Lemma~\ref{lem:SD},
the expected privacy loss is also exponentially small, so we can still have
truthfulness.

\Version{$ $Id$ $}
\section{Discrete Facility Location}\label{sec:facility}

In this section, we apply our framework to discrete facility location.  Let $\Theta = \{\ell_1<\ell_2<\cdots<\ell_q\} \subset [0,1]$ be a finite set of types indicating player's preferred locations for a facility on the unit interval and $O = [0,1]$.
Players prefer to have the facility located as close to them as possible: $\Uo[i](\theta_i,o) = - |\theta_i - o|$.   For example, the mechanism may be selecting a location
for a bus stop along a major highway, and the locations $\ell_1,\ldots,\ell_q$ might correspond to cities along the highway where potential bus riders
live.\svc{added example to justify discreteness... feel free to edit to make it more natural/realistic}

Note that the voting game we previously considered can be represented as the special case where $\Theta = \{ 0,1 \}$.  This problem has a well-known truthful and economically
efficient mechanism: select the location of the median report.  Xiao~\cite{Xiao11} gave a private and truthful mechanism for this problem based on taking the median of a perturbed histogram.  His analysis only proved that the mechanism satisfies ``approximate differential privacy'' (often called $(\eps,\delta)$ differential privacy).
To use Proposition~\ref{prop:dp}, we need the mechanism to satisfy pure $\eps$ differential privacy (as in Definition~\ref{def:dp}).
Here we do that for a variant of Xiao's mechanism.\\ 

\begin{mechanism} \label{mech:dfac}
Differentially private discrete facility location mechanism

\noindent Input: profile $\theta\in \Theta^n$ of types, privacy parameter $\eps>0$.
\begin{squishenum}
\item Construct the histogram $h = (h_1,\ldots,h_q)$ of reported type frequencies where $h_j$ is the number of reports $\theta_i$ of type $\ell_j$ and $q = |\Theta|$.
\item Choose a random (nonnegative, integer) noise vector $r = (r_1,\ldots,r_q)\in \N^q$ where the components $r_j$ are chosen
independently such that $\pr{r_j = k}$ is proportional to $\exp(-\eps k/2)$.
\item Output the type  corresponding to median of the perturbed histogram $h + r$.  That is, we output $\ell_{\Med(h+r)}$, where for $z\in \N^q$ we define
$\Med(z)$ to be the minimum $k\in [q]$ such that $\sum_{j=1}^k z_j \geq \sum_{j = k+1}^q z_j$).
\end{squishenum}
\end{mechanism}

Xiao's mechanism instead chooses the noise components $r_j$ according to a truncated and shifted Laplace distribution.  Specifically,
$\Pr[r_j=k]$ is proportional to $\exp((\eps/2)\cdot |k-t|)$ for $k=0,\ldots,2t$ and $\Pr[r_j=k]=0$ for $k>2t$,
where $t=\Theta(\log(1/\delta)/\eps)$.  This ensures that the noisy histogram $h+r$
is $(\eps,q\delta)$ differentially private, and hence the outcome $\ell_{\Med(h+r)}$ is as well.  Our proof directly analyzes the median,
without passing through the histogram.  This enables us to achieve pure $\eps$ differential privacy and use a simpler noise distribution.
On the other hand, Xiao's analysis is more general, in that it applies to any mechanism that computes its
result based on a noisy histogram.

\begin{lemma} \label{lem:dfacisdp}
Mechanism~\ref{mech:dfac} is $\eps$-differentially private.
\end{lemma}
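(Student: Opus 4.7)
First I would reduce the claim to the inequality $\Pr[\Med(h+r)=k]\le e^{\eps}\Pr[\Med(h'+r)=k]$, where $h$ and $h'$ are the histograms induced by two type profiles differing in one player's report (say, moving one vote from bin $a$ to bin $b$, so that $h'=h-e_a+e_b$; without loss of generality $a<b$), and $\ell_k$ is any possible output. My plan is to condition on the noise coordinates $r_{-k}$ and show that the ratio of conditional probabilities is already bounded by $e^{\eps}$ for every fixed $r_{-k}$; since the distribution of $r_{-k}$ does not depend on the input, integrating over it preserves the bound.

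The crucial step is a threshold characterization of the event $\Med(h+r)=k$ given $r_{-k}$. Writing $U=\sum_{j<k}(h_j+r_j)$ and $V=\sum_{j>k}(h_j+r_j)$, the two inequalities $\sum_{j\le k}(h_j+r_j)\ge\sum_{j>k}(h_j+r_j)$ and $\sum_{j\le k-1}(h_j+r_j)<\sum_{j>k-1}(h_j+r_j)$ that define $\Med(h+r)=k$ simplify to $h_k+r_k\ge V-U$ and $h_k+r_k>U-V$, respectively. Combining them, $\Med(h+r)=k$ iff $r_k\ge \tilde{A}_k(h,r_{-k}):=\max\bigl(0,\,m(U,V)-h_k\bigr)$, where $m(U,V):=\max(V-U,\,U-V+1)$. (For $k=1$ the second inequality is automatic whenever the total mass is positive, and the same formula still applies.) Since the geometric noise satisfies $\Pr[r_k\ge c]=e^{-\eps c/2}$ for every $c\in\N$, this yields
\[
\Pr[\Med(h+r)=k \mid r_{-k}] \;=\; e^{-\eps\, \tilde{A}_k(h,r_{-k})/2}.
\]

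What remains is to verify that $|\tilde{A}_k(h,r_{-k})-\tilde{A}_k(h',r_{-k})|\le 2$ for every $r_{-k}$. A short check shows $m$ is $1$-Lipschitz in each of $U$ and $V$ (the two branches of the max are affine with slopes $\pm1$ and differ by only $1$ at the crossover $V=U$), and $\max(0,\cdot)$ is $1$-Lipschitz, so
\[
|\tilde{A}_k(h)-\tilde{A}_k(h')|\;\le\;|U(h)-U(h')|+|V(h)-V(h')|+|h_k-h'_k|.
\]
Going through the five positional cases $k<a$, $k=a$, $a<k<b$, $k=b$, $k>b$, the right-hand side evaluates to either $0$ or exactly $2$. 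Therefore the conditional ratio is at most $e^{\eps}$ everywhere, and integrating over $r_{-k}$ (whose distribution is input-independent) yields the marginal bound, completing the proof.

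The main obstacle is arriving at the threshold characterization. A priori, the one-sided support of the geometric noise looks problematic: the full noisy histogram $h+r$ is \emph{not} $\eps$-differentially private, because any output with $v_b=h_b$ has zero probability under $h'$ but positive probability under $h$, producing an infinite ratio; so one cannot simply invoke post-processing on a naive noisy-histogram analysis. The median circumvents this because, conditional on $r_{-k}$, the event $\Med=k$ depends on $r_k$ only through a single one-sided threshold, which is exactly the kind of event for which the geometric tail gives a clean $e^{-\eps c/2}$ bound. Once this observation is in place, the remaining Lipschitz bookkeeping is routine.
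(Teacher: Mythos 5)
Your proof is correct, and it takes a genuinely different route from the paper. The paper decomposes the change from $h$ to $h'$ into two atomic $\pm 1$ modifications to a single bin and, for each, uses a coupling/injection argument: the shift map $f_j(s) = (s_j+1, s_{-j})$ satisfies $\Pr[r=s] = e^{\eps/2}\Pr[r=f_j(s)]$ and preserves the event that $j$ is the median, and summing over $s$ gives an $e^{\eps/2}$ bound per atomic change. You instead condition on $r_{-k}$ and derive an explicit closed-form conditional probability $e^{-\eps\,\tilde{A}_k(h,r_{-k})/2}$, using the fact that the partial-sum function $g(m) = \sum_{j\le m}z_j - \sum_{j>m}z_j$ is non-decreasing (since the $z_j$ are non-negative), so that the median condition collapses to a single one-sided threshold on $r_k$; then you show this threshold is $2$-Lipschitz under a one-report change. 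Both approaches exploit the one-sidedness of the geometric noise, but in different places: the paper via the exact shift ratio (which requires $f_j(s)$ to stay in the support), you via the exact tail formula $\Pr[r_k\ge c]=e^{-\eps c/2}$. Your observation that the full noisy histogram itself is \emph{not} $\eps$-DP under one-sided noise (the infinite ratio at $v_b=h_b$) is correct and is exactly why the paper also analyzes the median directly rather than invoking post-processing on the histogram. Your argument gives a useful explicit formula and avoids constructing the injection; the paper's coupling is perhaps cleaner to verify and mirrors the standard "treat the noise as an extra player" intuition, though as the paper notes it is median-specific in this form while Xiao's (truncated, two-sided) approach generalizes to any function of the noisy histogram at the cost of only $(\eps,\delta)$-DP. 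One small point you handled correctly but should keep explicit when writing this up: the $k=1$ edge case of your formula $\tilde A_1=\max\bigl(0,\max(V,1-V)-h_1\bigr)$ only matches the true threshold $\max(0,V-h_1)$ because total mass $\sum_j h_j\ge 1$, which forces $h_1\ge 1$ whenever $V=0$.
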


\begin{proof}
Differential privacy requires that on any pair of histograms $h,h'$ reachable by one player reporting different types,
the probability of any particular outcome $o=\ell_j$ being selected differs by at most an $e^\eps$ multiplicative factor.
Since reporting a different type results in two changes to the histogram (adding to one type and subtracting from another),
we show that on each such change the probability differs by at most an $e^{\eps/2}$ factor.

Consider two histograms $h$ and $h'$ that differ only by an addition or subtraction of 1 to a single entry.  Let $f_j: \N^q \rightarrow \N^q$ map a vector
$s$ to the vector $(s_j+1,s_{-j})$ (i.e., identical except $s_j$ has been increased by 1).  $f_j$ is an injection and has the property that if $j$ is the
median of $h+s$ then $j$ is also the median of
$h'+f_j(s)$.  Note that under our noise distribution, we have $\Pr[r=s] = e^{\eps/2}\cdot  \Pr[r=f_j(s)]$.

Then writing $\M$ as a function of $h$ rather than $\theta$, we have:
\begin{align*}
\pr{\M(h) = \ell_j} &= \sum_{s \text{ s.t. } \Med(h+s)=j} \pr{r=s}\\
&= \sum_{s \text{ s.t. } \Med(h+s)=j} e^{\eps/2}\cdot\pr{r=f_j(s)} \\
&\leq e^{\eps/2}\cdot \sum_{s \text{ s.t. } \Med(h'+f_j(s))=j}\pr{r=f_j(s)}\\ 
&\leq e^{\eps/2} \sum_{s' \text{ s.t. } \Med(h'+ s')=j}\pr{r=s'} \\
&=  e^{\eps/2}\cdot\pr{\M(h')= \ell_j}.
\end{align*}
A symmetric argument shows this is also true switching $h$ and $h'$, which completes the proof.
\end{proof}

We note that the only property the proof uses about the noise distribution is that $\Pr[r=s] = e^{\eps/2}\cdot \Pr[r=f_j(s)]$.  This property
does not hold for Xiao's noise distribution as described, due to it being truncated above at $2t$, but would hold if his noise distribution was truncated only below.

We next show that this mechanism is truthful
and individually rational.

\begin{theorem} 
\label{thm:facility}
Mechanism~\ref{mech:dfac} is universally truthful and individually rational
for player $i$ provided that, for some function $F_i$:
\begin{squishenum}
\item Player $i$'s privacy utility $\Up[i]$ satisfies Assumption~\ref{ass:generic-privacy} with privacy bound function $F_i$, and
\item For all $o,o'\in \Theta$ such that $\theta_i<o<o'$ or $o'>o>\theta_i$, we have
$\Uo[i](\theta_i,o)-\Uo[i](\theta_i,o') \geq 2 F_i(e^\eps).$
\end{squishenum}
In particular, if all players share the standard outcome utility function $\Uo[i](\theta_i,o)=-|\theta_i-o|$ and have the same privacy bound
function $F_i=F$, then the mechanism is universally truthful and individually rational provided that
$$\min_{j\neq k} |\ell_j-\ell_k| \geq 2F(e^\eps).$$
\end{theorem}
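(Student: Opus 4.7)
The plan is to establish universal truthfulness via Lemma~\ref{lem:universal}, whose first hypothesis---that $\M$ is $\eps$-differentially private---is already provided by Lemma~\ref{lem:dfacisdp}. So the substantive work is to verify the second hypothesis: that whenever, with the random choice $r$ held fixed, a deviation from $\theta_i$ to $\theta_i'$ changes the outcome from $o$ to $o'$, the resulting outcome-utility gap is at least $2F_i(e^\eps)$. I will obtain this from a single structural fact about the median under a one-player report change, and then read the gap off hypothesis~(2).

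\medskip
\noindent\textbf{Structural claim.} Let $\theta_i = \ell_k$ and $\theta_i' = \ell_{k'}$ with $k < k'$. For any fixed noise $r$, let $j$ and $j'$ be the median indices of $h+r$ and $h'+r$, where $h,h'$ are the histograms induced by $(\theta_i,\theta_{-i})$ and $(\theta_i',\theta_{-i})$. Then $k \leq j \leq j' \leq k'$. The analogous statement holds by symmetry when $k > k'$.
\medskip

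To prove the claim I work with the cumulative sums $S_\ell = \sum_{i\leq \ell}(h+r)_i$ and $S'_\ell = \sum_{i\leq \ell}(h'+r)_i$. Since the deviation decrements bucket $k$ and increments bucket $k'$ while leaving the total $T$ unchanged, $S'_\ell = S_\ell$ for $\ell<k$ or $\ell \geq k'$, and $S'_\ell = S_\ell - 1$ for $k \leq \ell < k'$. Recalling $\Med(z) = \min\{\ell : 2S_\ell \geq T\}$, a short case analysis on the position of $j$ relative to the window $[k,k'-1]$ yields the claim: outside this window the cumulative sums (and hence the median) are unchanged; inside, the median can only move weakly right (because $S'_\ell \leq S_\ell$ pointwise), and it cannot pass $k'$ (because $S'_{k'} = S_{k'} \geq S_j \geq T/2$). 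In particular, a strict outcome change forces $k \leq j < j' \leq k'$, so $\theta_i \leq \ell_j < \ell_{j'} \leq \theta_i'$; combining with hypothesis~(2) (and its natural boundary extension to $\ell_j = \theta_i$, which covers the edge case $j=k$) completes the application of Lemma~\ref{lem:universal}.

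For individual rationality the argument is parallel. Abstention corresponds to removing player $i$'s contribution to bucket $k$, which alters $S_\ell$ only for $\ell \geq k$ (by $-1$) and changes the total from $T$ to $T-1$. An analogous cumulative-sums case analysis shows that the abstention median lies weakly farther from $\theta_i$ than the truthful median. Since abstention incurs zero privacy cost and the truthful privacy cost is bounded by $F_i(e^\eps)$ via Proposition~\ref{prop:dp}, truthful participation weakly dominates abstention. The particular case is then immediate: under $\Uo[i](\theta_i,o) = -|\theta_i - o|$, whenever $\theta_i \leq o < o'$ in $\Theta$ we have $\Uo[i](\theta_i,o)-\Uo[i](\theta_i,o') = o' - o \geq \min_{j\neq k}|\ell_j - \ell_k| \geq 2F(e^\eps)$. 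The main obstacle is simply the bookkeeping in the cumulative-sums case analysis---tracking parity and the boundary $j=k$---but no new idea is required beyond Lemma~\ref{lem:universal} and the monotonicity of the median with respect to the shifted histogram.
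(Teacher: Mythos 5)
Your proof takes essentially the same route as the paper's: invoke Lemma~\ref{lem:universal}, with $\eps$-differential privacy supplied by Lemma~\ref{lem:dfacisdp}, and verify the outcome-utility-gap condition by showing that any outcome change produced by a one-player deviation must move the median \emph{away} from that player's preferred location. Where the paper states the median-monotonicity fact in one line (``if $\theta_i<o$ then no other report of player $i$ can reduce the median''), you give a careful cumulative-sums verification. Your reasoning is correct, but two points deserve a closer look.

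First, the structural claim as you state it---``Then $k\le j\le j'\le k'$''---is not true unconditionally. Take $h=(1,0,100)$, $r=0$, $\theta_i=\ell_1$ ($k=1$), $\theta_i'=\ell_2$ ($k'=2$): both medians equal $3$, so $j=j'=3>k'$. The property you actually prove and use is the correct one: $j\le j'$, and if the outcome strictly changes then $k\le j<j'\le k'$, which is what the application to Lemma~\ref{lem:universal} requires (together with the $\ell_j=\theta_i$ boundary case you rightly flag as not literally covered by hypothesis~(2); the paper's proof is equally silent about $o=\theta_i$).

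Second, your individual-rationality argument leans on the phrase ``abstention incurs zero privacy cost,'' which is not consistent with the paper's model. $\Up[i]$ is a function of $(\theta_i,o,\M,\theta_{-i})$---it does not depend on player $i$'s \emph{report}---so when the outcome is unchanged, abstention and truthful participation yield the \emph{same} privacy utility, not $0$ versus $F_i(e^\eps)$. If abstention really gave privacy utility $0$ while participation could give a strictly negative value, IR would \emph{fail} in exactly the case where abstaining does not move the median. The paper's argument avoids this by treating $\bot$ as just another possible report and reapplying Lemma~\ref{lem:universal}: when the outcome is unchanged both utility terms are identical (Case~1), and when the outcome changes the monotonicity fact and Proposition~\ref{prop:dp} give the needed gap (Case~2). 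Replacing your last sentence with that invocation fixes the argument without changing your structural work.
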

So, for a fixed set $\Theta$ of player types (preferred locations), we can take $\eps$ to be a small constant and have truthfulness and individual rationality.


\begin{proof}

Fix $r\in\N^q$, the randomness used by the mechanism and the reports $\theta_{-i}$ of other players.  Following
Xiao~\cite{Xiao11}, we think of $r$ as representing the reports of some fictional additional players, and follow the
truthfulness reasoning for the standard, noiseless median mechanism.
Suppose $\M(\theta_i,\theta_{-i};r) = o$ and  $\M(\theta_i',\theta_{-i};r) = o' \neq o$.
If $\theta_i < o$, then no other report of player $i$ can reduce the median,
so we must have $o'>o$.
Thus, this change has moved the facility at least one location away from $i$'s preferred location.
Similarly, if $\theta_i > o$, we have $o' < o$ so again the change is away from $i$'s preferred location.
Therefore, universal truthfulness follows by Lemma~\ref{lem:universal}.
For individual rationality, we can model non-participation as a report of a type $\bot$ that does not get included in the histogram.
Again, any change of of the median caused by reporting $\bot$ will move it away from $i$'s preferred location.  Thus $\M$ is individually rational.
\end{proof}


\begin{proposition} \label{prop:facility-efficiency}
Suppose that every player $i$ has the standard outcome utility function $\Uo[i](\theta_i,o)=-|\theta_i-o|$.
Then for every profile of types $\theta\in \Theta^n$, if we choose $o\leftarrow \M(\theta)$ using Mechanism~\ref{mech:dfac}, we have
\begin{squishenum}
\item $\Pr\left[\sum_i \Uo[i](\theta_i,o) \leq \max_{o'}\left(\sum_i \Uo[i](\theta_i,o')\right)-\Delta\right] \leq q\cdot e^{-\eps \Delta/q}.$
\item $\Exp\left[\sum_i \Uo[i](\theta_i,o)\right] \geq \max_{o'}\left(\sum_i \Uo[i](\theta_i,o')\right)- O(q/\eps).$
\end{squishenum}
\end{proposition}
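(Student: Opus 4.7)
The plan is to reduce both parts of the proposition to a bound on the $\ell_1$-norm $N := \sum_{k=1}^q r_k$ of the noise vector. I would view $r$ as adding $r_k$ fictitious voters at location $\ell_k$. Since the $L^1$-median of a finite multiset of points on $[0,1]$ minimizes the sum of absolute deviations (and the ``left median'' $\Med$ defined in Mechanism~\ref{mech:dfac} is one such minimizer), the output $o = \ell_{j'}$ with $j' = \Med(h+r)$ minimizes the perturbed cost $\tilde f(\ell_j) := f(\ell_j) + f_{\mathrm{fict}}(\ell_j)$ over $j\in[q]$, where $f(o) = \sum_i |\theta_i - o|$ and $f_{\mathrm{fict}}(o) = \sum_k r_k |\ell_k - o|$. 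Similarly, the noiseless optimum $o^* = \ell_{j^*}$ with $j^* = \Med(h)$ maximizes $\sum_i \Uo[i](\theta_i, o') = -f(o')$ over $o' \in [0,1]$, because the $L^1$-median of values drawn from $\Theta$ already lies in $\Theta$.

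The hard part will be establishing the deterministic bound $f(\ell_{j'}) - f(\ell_{j^*}) \le N$. This will come from the optimality of $j'$ for $\tilde f$: $\tilde f(\ell_{j'}) \le \tilde f(\ell_{j^*})$ rearranges to $f(\ell_{j'}) - f(\ell_{j^*}) \le f_{\mathrm{fict}}(\ell_{j^*}) - f_{\mathrm{fict}}(\ell_{j'})$, and since $\ell_j \in [0,1]$ every location satisfies $0 \le f_{\mathrm{fict}}(\ell_j) \le \sum_k r_k = N$. Translating back to outcome utilities, this yields the pointwise (in $r$) inequality $\sum_i \Uo[i](\theta_i, o) \ge \max_{o'} \sum_i \Uo[i](\theta_i, o') - N$, reducing both claims to properties of $N$.

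With this reduction in hand, part~1 follows by tail-bounding $N$: each $r_k$ is an independent geometric random variable with $\Pr[r_k \ge m] = e^{-\eps m/2}$, so by a union bound $\Pr[N \ge \Delta] \le \Pr[\max_k r_k \ge \Delta/q] \le q\cdot e^{-\eps\Delta/(2q)}$, matching the claimed form up to the constant in the exponent. For part~2, a direct calculation gives $\Exp[r_k] = e^{-\eps/2}/(1 - e^{-\eps/2}) = O(1/\eps)$, so $\Exp[N] = q\cdot\Exp[r_1] = O(q/\eps)$, and the deterministic bound then yields $\Exp\bigl[\sum_i \Uo[i](\theta_i, o)\bigr] \ge \max_{o'}\sum_i \Uo[i](\theta_i, o') - O(q/\eps)$, as required.
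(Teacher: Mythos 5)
Your proof is correct and takes essentially the same route as the paper: you rewrite the social welfare as the $L^1$ cost $\sum_j h_j|\ell_j-o|$, observe that the mechanism's output minimizes the perturbed cost $\sum_j(h_j+r_j)|\ell_j-o|$, use $|\ell_j-o|\le 1$ to absorb the noise contribution into the additive term $N=\sum_j r_j$, and then tail-bound and compute the mean of $N$. The paper just packages the deterministic step as a chain of inequalities rather than a rearrangement; the content is identical. You are also right that the honest tail bound gives $q\,e^{-\eps\Delta/(2q)}$ rather than the stated $q\,e^{-\eps\Delta/q}$ (the noise has $\Pr[r_j\ge m]=e^{-\eps m/2}$), a factor-of-two slip that is also present, unremarked, in the paper's own proof.
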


Thus, the social welfare is within $\Delta=\tilde{O}(q)/\eps$ of optimal, both in expectation and with high probability.
Like with Proposition~\ref{prop:voting-efficiency}, these bounds are independent of the number $n$ of participants, so
we obtain asymptotically optimal social welfare as $n\rightarrow \infty$.  Also like the discussion after Proposition~\ref{prop:voting-efficiency}, by taking
$\eps=\eps(n)$ to be such that $\eps=o(1)$ and $\eps=\omega(1/n)$ (e.g., $\eps=1/\sqrt{n}$), the sum of privacy utilities is a vanishing fraction of $n$
(for participants satisfying Assumption~\ref{ass:generic-privacy} with a common privacy bound function $F$).

\begin{proof}
Note that $-\sum_i \Uo[i](\theta_i,o') = \sum_j h_j\cdot |\ell_j-o'|$, where $h=(h_1,\ldots,h_q)$ is the histogram corresponding to
$\theta$.  This social welfare is minimized by taking $o'=\Med(h)$.
Our mechanism, however, computes the optimal location for the noisy histogram $h+r$.
We can relate the two as follows:
\begin{eqnarray*}
-\sum_i \Uo[i](\theta_i,o) &=& \sum_j h_j\cdot |\ell_j-o|\\
&\leq& \sum_j (h_j+r_j)\cdot |\ell_j-o|\\
&=& \min_{o'} \sum_j (h_j+r_j)\cdot |\ell_j-o'|\\
&\leq& \min_{o'} \sum_j h_j\cdot |\ell_j-o'|+\sum_j r_j\\
&=& - \max_{o'} \sum_i \Uo[i](\theta_i,o') + \sum_jr_j.
\end{eqnarray*}
Thus, for the high probability bound, it suffices to bound the probability that $\sum_j r_j\geq \Delta$. 
This in turn is bounded by $q$ times\svc{I suspect we can save the ``union-bound'' factor of $q$ by directly
analyzing the tails of the sum of $q$ exponential random variables} the probability that
any particular $r_j$ is at least $\Delta/q$, which is at most $e^{-\eps\Delta/q}$.
For the expectation bound, we have
$$\Exp[\sum_j r_j] = \sum_j \Exp[r_j] = q\cdot \frac{1}{1-e^{-\eps/2}} = O\left(\frac{q}{\eps}\right).$$
\end{proof}

\ici{In the future, think about the non-discrete version.}
\Version{$ $Id$ $}
\section{General Social Choice Problems With Payments}\label{sec:vcg}

In the preceding two sections we have considered social choice problems where a group needs to choose among a (typically small) set of options
with mechanisms that do not use money.
In this section, we apply our framework social choice problems
where payments are possible using an
adaptation of the Vickrey-Clarke-Groves (VCG)
mechanism. (This is the setting for which the Groves mechanism was originally designed and unlike in auction settings the number of outcomes is independent of the number of players.)
In the general case we examine now, we don't assume any structure on the utility functions
(other than discreteness),
and thus
need to use payments to incentivize players to truthfully reveal their preferences.

Specifically, the type $\theta_i\in \Theta$ of a player will specify
a utility $\Uo(\theta_i,o)\in \{0,1,\ldots,M\}$ for
each outcome $o$ from a finite set $O$.
This could correspond, for example, to players having values for outcomes
expressible in whole dollars with some upper and lower bounds.
This assumption ensures a finite set of types $\Theta$ and that if a player changes his reported value it must change by some minimum amount (1 with our particular assumption).
Note that this formulation still allows players to be indifferent among outcomes.
Gicen our notion of a type,
all players share the same outcome utility function $\Uo[i]=\Uo$
In order to reason about individual rationality, we also assume that the
set of types includes a type $\bot$ that corresponds to not participating
(i.e., $\Uo[i](\bot,o) = 0$ for all $o$ and $i$).
For notational convenience, we assume that $O = \{0,1,\ldots,|O|-1\}$.


Our goal is to
choose the outcome $o^*$ that maximizes social welfare (ignoring privacy), i.e.,
$o^*=\argmax_{o\in O} \sum_i \Uo(\theta_i,o)$.  A standard way to do so is the Groves
mechanism, a special case of the more general VCG mechanism.  Each
player reports his type and then the optimal outcome $o^*$ is chosen based
on the reported types.  To ensure truthfulness, each player is charged
the externality he imposes on others.  If
\svc{dropped ``$o_i$ is the outcome chosen
with his input'' and am using $o^*$ instead for consistency with mechanism below}
$o_{-i}=\argmax_o \sum_{j\neq i} \Uo(\theta_j,o)$ is the outcome that would have
been chosen without $i$'s input, then player $i$ makes a payment of
\begin{equation}
\label{eqn:VCG}
P_i = \sum_{j \neq i} \left(\Uo(\theta_j,o_{-i}) - \Uo(\theta_j,o^*)\right),
\end{equation}
for a combined utility of
$\Uo(\theta_i,o^*)-P_i.$

In addition to subtracting payments from player $i$'s utility as above, we also need
to consider the effect of payments on privacy.  (The modelling in Section~\ref{sec:modelprivacy} did not
consider payments.)
 While it may be reasonable to treat the
payments players make as secret, so that making the payment does not
reveal information to others, the amount a player is asked to
pay reveals information about the reports of {\em other} players.
Therefore, we will require that the mechanism releases some {\em public}
payment information $\pi$ that enables all players to compute
their payments,
i.e., 
the payment $P_i$ of player $i$ should be a function of $\theta_i$, $\pi$, and $o^*$.  For example, $\pi$ could just
be the $n$-tuple $(P_1,\ldots,P_n)$, which corresponds to making all payments public.  But
in the
VCG
mechanism
it suffices for $\pi$ to
include
the value $V_o=\sum_i \Uo(\theta_i,o)$ for all
outcomes $o\in O$, since
\begin{eqnarray*}
P_i = (V_{o_{-i}}-\Uo(\theta_i,o_{-i}))-
(V_{o^*}-\Uo(\theta_i,o^*))
= max_o \left((\Uo(\theta_i,o^*)-\Uo(\theta_i,o))-(V_{o^*}-V_{o})\right),
\end{eqnarray*}
which can be computed using just the $V_o$'s, $o^*$, and $\theta_i$.  Moreover, we actually only need to release
the differences $V_{o^*}-V_o$, and only need to do so for outcomes $o$ such that $V_{o^*}-V_o\leq M$, since only such outcomes have a chance of
achieving the above maximum. (Recall that $\Uo(\theta_i,o)\in \{0,1,\ldots,M\}$.)
This observation forms the basis of our mechanism, which we will show to be
truthful for players that value privacy (under Assumption~\ref{ass:generic-privacy}).

Before stating our mechanism, we summarize how we take payments into account in our modelling.
Given reports $\theta'\in \Theta^n$ and randomness $r$, our mechanism $\M(\theta';r)$ outputs a pair
$(o^*,\pi)$, where $o^*\in O$ is the selected outcome and $\pi$ is ``payment information''.  Each player then
should send payment $P_i=P(\theta_i',o^*,\pi)$  to the mechanism.  (The payment function $P$ is something
we design together with the mechanism $\M$.)
If player $i$'s true type is $\theta_i$, then her total utility is:\svc{replaced $P_i$ with $P(\cdots)$, and consequently
added $\pi$ and $\theta'_i$ to
arguments of $U_i$}
$$U_i(\theta_i,o^*,\pi,\M,\theta')=\Uo(\theta_i,o^*)-P(\theta_i',o^*,\pi)+\Up[i](\theta_i,(o^*,\pi),\M,\theta'_{-i}).$$
Note that we measure the privacy of the {\em pair} $(o^*,\pi)$, since both are released publicly.

To achieve truthfulness for players that value privacy, we will modify the VCG mechanism described above by
adding noise to the values $V_o$.  This yields the following mechanism:

\begin{mechanism}\label{mech:vcg}
Differentially private VCG mechanism

\noindent Input: profile $\theta\in \Theta^n$ of types, privacy parameter $\eps>0$.
\begin{squishenum}
\item Choose $\lambda_o$ from a (discrete) Laplace distribution for each outcome $o$.  Specifically,
we set $\Pr[\lambda_o = k] \propto \exp(- (\eps\cdot |k|)/(M\cdot |O|))$ for every integer $k\in \Z$.\svc{for
differential privacy, we only need to divide by $M$ in the exponent - the maximum amount by which a single
player can change a $V_o$}
\item Calculate values $V_o =  \sum_j \Uo(\theta_j,o) + \lambda_o + o/|O|$ for each outcome $o$.  (Recall that we
set $O=\{0,\ldots,|O|-1\}$. The $o/|O|$ term is introduced in order to break ties.)\svc{added parenthetical comment}
\item Select outcome $o^* = \arg\max_o V_o$.
\item Set the payment information $\pi = \{ (o,V_{o^*}-V_o) :  V_o\geq V_{o^*}-M\}$.
\item Output $(o^*,\pi)$.
\end{squishenum}
Each player $i$ then sends a payment of
$P_i = P(\theta_i,o^*,\pi) =   \max_o \left((\Uo(\theta_i,o^*)-\Uo(\theta_i,o))-(V_{o^*}-V_{o})\right).$
\end{mechanism}

By standard results on differential privacy, the tuple of noisy values $\{V_o\}$ is $\eps$-differentially private.  Since the output
$(o^*,\pi)$ is a function of the $V_o$'s, the output is also differentially private:

\begin{lemma}
\label{lem:VCGpriv1}
Mechanism~\ref{mech:vcg} is $\eps$-differentially private.
\end{lemma}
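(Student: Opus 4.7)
The plan is to reduce the claim to a standard application of the Laplace mechanism followed by post-processing. Observe first that the output $(o^*,\pi)$ is a deterministic function of the noisy value vector $V = (V_o)_{o\in O}$: the outcome $o^*$ is just $\arg\max_o V_o$ (the $o/|O|$ term guarantees a unique maximizer), and $\pi$ is read off directly from the differences $V_{o^*}-V_o$. Hence, by the well-known post-processing property of differential privacy, it suffices to show that the randomized map $\theta\mapsto V$ is itself $\eps$-differentially private.

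For that step, I would argue as follows. The constant offset $o/|O|$ does not depend on any player's report, so I can equivalently analyze the integer-valued vector $W = (W_o)_{o\in O}$ with $W_o = \sum_j \Uo(\theta_j,o)+\lambda_o$. Fix two neighboring profiles $\theta,\theta'$ differing only in player $i$'s report, and fix an arbitrary realization $w = (w_o)_{o\in O}$. Because the $\lambda_o$'s are independent,
\[
\frac{\Pr[W=w\mid\theta]}{\Pr[W=w\mid\theta']}
 = \prod_{o\in O}\frac{\Pr\bigl[\lambda_o = w_o-\textstyle\sum_j \Uo(\theta_j,o)\bigr]}{\Pr\bigl[\lambda_o = w_o-\sum_j \Uo(\theta'_j,o)\bigr]}.
\]
For each outcome $o$, the arguments of the two Laplace probabilities differ by $\delta_o := \Uo(\theta'_i,o)-\Uo(\theta_i,o)$, which is an integer with $|\delta_o|\le M$. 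The discrete Laplace density $\Pr[\lambda_o=k]\propto \exp(-(\eps/(M|O|))|k|)$ satisfies the shift bound $\Pr[\lambda_o=k]/\Pr[\lambda_o=k+\delta_o]\le \exp((\eps/(M|O|))|\delta_o|)$ by the triangle inequality on $|\cdot|$. Multiplying these bounds across the $|O|$ coordinates gives a total ratio of at most $\exp\!\bigl((\eps/(M|O|))\sum_o |\delta_o|\bigr)\le \exp(\eps)$.

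Combining the two steps yields $\eps$-differential privacy of the full output. The main obstacle is really just bookkeeping: making sure that (i) the sensitivity argument correctly uses both $M$ (the per-coordinate bound on $|\delta_o|$) and $|O|$ (the number of coordinates), which together justify the noise scale $M\cdot|O|/\eps$ in the exponent; and (ii) the tie-breaking term $o/|O|$ is accounted for (it is data-independent and hence neutral for privacy). Everything else is either the elementary shift bound on the discrete Laplace or standard post-processing, so the proof should be short once these details are laid out.
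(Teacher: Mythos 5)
Your proof is correct and follows exactly the same two-step approach the paper uses: the paper invokes ``standard results on differential privacy'' to say the noisy vector $\{V_o\}$ is $\eps$-DP (which is precisely the Laplace-mechanism sensitivity calculation you carry out), and then appeals to post-processing since $(o^*,\pi)$ is a function of the $V_o$'s. You have simply unpacked the standard argument the paper cites in one line, so the two proofs are the same.
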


We now prove that the mechanism is truthful in expectation for players that value privacy (satisfying Assumption~\ref{ass:generic-privacy}).
To do this, we use Lemma~\ref{lem:SD}, which shows
that by taking $\eps$ sufficiently small, the expected change in privacy utility from misreporting $\theta_i'$ instead of $\theta_i$ can be
made an arbitrarily small fraction of the statistical difference
$\SD(\M(\theta_i,\theta_{-i}),\M(\theta_i',\theta_{-i}))$.  Thus, to show truthfulness in expectation, it suffices to show that
the statistical difference is at most a constant factor larger than the
expected decrease in utility from misreporting.  That is, we want to show:
\begin{eqnarray*}
\lefteqn{\SD(\M(\theta_i,\theta_{-i}),\M(\theta_i',\theta_{-i}))}&&\\
&=&
O\left(\Exp[\Uo(\theta_i, \M(\theta_i,\theta_{-i}))-P(\theta_i,\M(\theta_i,\theta_{-i}))] -
\Exp[\Uo(\theta_i, \M(\theta_i',\theta_{-i}))-P(\theta_i',\M(\theta_i',\theta_{-i}))]\right).
\end{eqnarray*}

To bound the statistical difference, we write $\M(\theta;r)=(\M^1(\theta;r),\M^2(\theta;r))$, where $\M^1$ gives the outcome $o^*$ and $\M^2$
gives the payment information $\pi$.  Then we have:
\begin{align*}
\SD(\M(\theta_i,\theta_{-i}),\M(\theta_i',\theta_{-i}))
&\leq \Pr_r[\M(\theta_i,\theta_{-i};r)\neq \M(\theta_i',\theta_{-i};r)]
\leq \Pr_r[\M^1(\theta_i,\theta_{-i};r)\neq \M^1(\theta_i',\theta_{-i};r)]\\
&+ \Pr_r[\M^1(\theta_i,\theta_{-i};r) = \M^1(\theta_i',\theta_{-i};r) \wedge \M^2(\theta_i,\theta_{-i};r) \neq \M^2(\theta_i',\theta_{-i};r)].
\end{align*}

The next lemma bounds the statistical difference coming from the outcome:

\begin{lemma}
\label{lem:VCGout}
\begin{eqnarray*}
\Pr_r[\M^1(\theta_i,\theta_{-i};r)\neq \M^1(\theta_i',\theta_{-i};r)]
&\leq |O| \cdot (
\Exp[\Uo(\theta_i, \M^1(\theta_i,\theta_{-i}))-P(\theta_i,\M(\theta_i,\theta_{-i}))]\\
& -
\Exp[\Uo(\theta_i, \M^1(\theta_i',\theta_{-i}))-P(\theta_i',\M(\theta_i,\theta_{-i}))]).
\end{eqnarray*}
\end{lemma}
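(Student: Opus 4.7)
The plan is to derive, for every fixed coin string $r$, a deterministic pointwise lower bound on the quasi-linear truthfulness gain in terms of the indicator that the chosen outcome actually changes, and then pass to expectations. I will show that truthful reporting is weakly better for every $r$ (the standard VCG guarantee survives the noise) and strictly better by an additive quantum of $1/|O|$ whenever $\M^1(\theta_i,\theta_{-i};r) \neq \M^1(\theta_i',\theta_{-i};r)$; rearranging the resulting inequality then gives the lemma.

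For the VCG bookkeeping, introduce the shorthand $W_o(\theta) := \sum_j \Uo(\theta_j,o) + \lambda_o + o/|O|$, so the mechanism's $V_o$ equals $W_o(\theta)$ and $o^* := \M^1(\theta_i,\theta_{-i};r) = \arg\max_o W_o(\theta)$. I would plug the payment formula into the quasi-linear utility and simplify, using the fact that $W_{-i,o}(\theta) := W_o(\theta) - \Uo(\theta_i,o)$ is invariant under player $i$'s report. A short calculation should yield
\[
\Uo(\theta_i,o^*) - P(\theta_i,o^*,\pi) \;=\; W_{o^*}(\theta) - \max_o W_{-i,o}(\theta),
\]
and analogously, writing $o' := \M^1(\theta_i',\theta_{-i};r)$ and letting $\pi'$ denote the payment information released when $i$ reports $\theta_i'$,
\[
\Uo(\theta_i,o') - P(\theta_i',o',\pi') \;=\; W_{o'}(\theta) - \max_o W_{-i,o}(\theta),
\]
with the very same $\max_o W_{-i,o}(\theta)$ appearing in both. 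Hence the pointwise gain from truthfulness is exactly $W_{o^*}(\theta) - W_{o'}(\theta) \geq 0$.

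The quantitative gap then comes from the tie-breaker term: by construction, $|O| \cdot W_o(\theta)$ is always an integer, so $W_{o^*}(\theta) - W_{o'}(\theta) \in \{k/|O| : k \in \Z\}$, and since $o^*$ is the unique argmax, the difference is at least $1/|O|$ on the event $\{o^*\neq o'\}$ and exactly zero otherwise. Taking expectations over $r$ gives
\[
\Exp\bigl[W_{o^*}(\theta) - W_{o'}(\theta)\bigr] \;\geq\; \frac{1}{|O|}\cdot \Pr_r[\M^1(\theta_i,\theta_{-i};r) \neq \M^1(\theta_i',\theta_{-i};r)],
\]
which, combined with the two identities above, rearranges to the claimed inequality.

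I expect the main obstacle to be the VCG algebra in the misreporting line, because the payment $P(\theta_i', o', \pi')$ uses $\theta_i'$ and the $V_o$'s computed from $\theta_i'$, while the outcome utility is evaluated at the true $\theta_i$. The apparent asymmetry should collapse precisely because $W_{-i,o}$ does not see player $i$'s report, but this requires carefully tracking which profile each quantity is evaluated at. A minor additional check is that the truncated release $\pi = \{(o, V_{o^*}-V_o) : V_o \geq V_{o^*}-M\}$ carries enough information to compute $P$: any outcome outside this set has $(\Uo(\theta_i,o^*)-\Uo(\theta_i,o))-(V_{o^*}-V_o) < 0$ and so cannot attain the maximum in the payment formula (the term at $o=o^*$ is already $0$), so restricting the max to $\pi$ changes nothing.
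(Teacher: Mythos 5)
Your proof is correct and follows essentially the same route as the paper's: fix $r$, reduce to a pointwise comparison, observe that the noise term $\lambda_o + o/|O|$ can be absorbed as a fictitious player's report so the VCG truthfulness bound applies, and then extract the $1/|O|$ quantum from the deterministic tie-breaking term. Your bookkeeping is a bit cleaner than the paper's in that you derive the identity $\Uo(\theta_i,o) - P = W_o(\theta) - \max_{o''} W_{-i,o''}(\theta)$ once (with the $\max$ term manifestly report-independent) and then subtract, whereas the paper chains inequalities through $W_{o_{-i}}$; you also explicitly verify that the truncated release $\pi$ still determines $P$, a point the paper only asserts in prose. These are presentational differences, not a different argument.
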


\svci{added more detail to beginning and end of proof to help people less familiar with VCG}
\begin{proof}
It suffices to show that for every value of $r$, we have:
\begin{align}
\label{eqn:vcg-universal}
\hspace{-4mm}
I[\M^1(\theta_i,\theta_{-i};r)\neq \M^1(\theta_i',\theta_{-i};r)]
&\leq |O| \cdot (
\Uo(\theta_i, \M^1(\theta_i,\theta_{-i};r))-P(\theta_i,\M(\theta_i,\theta_{-i};r))\\
&-
\Uo(\theta_i, \M^1(\theta_i',\theta_{-i};r))-P(\theta_i',\M(\theta_i,\theta_{-i};r))), \nonumber
\end{align}
where $I[X]$ denotes the indicator for the event $X$.  (Then taking expectation over $r$ yields the desired result.)

If $\M^1(\theta_i,\theta_{-i};r)= \M^1(\theta_i',\theta_{-i};r)$, then both the left-hand and right-hand sides are zero.
(Recall that the payment made by player $i$ on an outcome $o$ depends only on the reports of the other players and the randomness of the mechanism.)

So consider a value of $r$
such that $\M^1(\theta_i,\theta_{-i};r)\neq \M^1(\theta_i',\theta_{-i};r)$ (i.e., where the indicator is 1).
We can treat the $\lambda_o + o/|O|$ term added to each $V_o$ as the report of another player to the standard VCG mechanism.
We know that
\[
\Uo(\theta_i, \M^1(\theta_i,\theta_{-i};r))-P(\theta_i,\M(\theta_i,\theta_{-i};r)) -
\Uo(\theta_i, \M(\theta_i',\theta_{-i};r))-P(\theta_i',\M(\theta_i,\theta_{-i};r))
\geq 0
\]
because VCG is incentive compatible for players who don't have a privacy utility.
Since the mechanism adds an $o/|O|$ term to $V_o$ to avoid ties, the above inequality is strict. Moreover, the left-hand side is at least $1/|O|$, which establishes Inequality (\ref{eqn:vcg-universal}).

In more detail,
let $o^*=\M^1(\theta_i,\theta_{-i};r)$ and $o'=\M^1(\theta_i',\theta_{-i};r)$ for some $o'\neq o^*$.
Write $W_o =  \sum_{j\neq i} \Uo[j](\theta_j,o) + \lambda_o + o/|O|$ for each outcome $o$  ($W_o$ is just $V_o$ excluding the report of player $i$),
and $o_{-i} = \argmax_o W_o$.  Since the mechanism chose $o^*$ on report $\theta_i$, we must have
$$W_{o^*}+\Uo(\theta_i,o^*) \geq W_{o'}+\Uo(\theta_i,o').$$  Since the fractional parts of the two sides are different multiples of $1/|O|$ (namely
$o^*/|O|$ ad $o'/|O|$), we have:
$$W_{o^*}+\Uo(\theta_i,o^*) \geq W_{o'}+\Uo(\theta_i,o')+1/|O|.$$
Thus:
\begin{align*}
&\Uo(\theta_i, \M^1(\theta_i,\theta_{-i};r))-P(\theta_i,\M(\theta_i,\theta_{-i};r))\\
&= \Uo(\theta_i,o^*)-(W_{o_{-i}}-W_{o^*})\\
&\geq \Uo(\theta_i,o')-(W_{o_{-i}}-W_{o'})+1/|O|\\
&= \Uo(\theta_i, \M(\theta_i',\theta_{-i};r))-P(\theta_i',\M(\theta_i,\theta_{-i};r))+1/|O|,
\end{align*}
establishing Inequality~(\ref{eqn:vcg-universal}).
\end{proof}

Now we need to prove a similar bound for the probability of misreporting only affecting the payment information $\pi$.
We note that one trivial solution for handling payments is to only collect payments with a very small probability $p$, but
increase the magnitude of the payments by a factor of $1/p$. In order for payments to not
contribute more to the statistical difference than the outcome, we can take
$p$ to be the minimum possible nonzero value of the probability that a misreport can change the outcome
(i.e., $\Pr_r[\M^1(\theta_i,\theta_{-i};r)\neq \M^1(\theta_i',\theta_{-i};r)]$).
However, this quantity is exponentially small in $n$.
This would make the magnitude of payments exponentially large, which is
undesirable. (Our assumption that players are risk neutral seems
unreasonable in such a setting.)  However, it turns out that we do not
actually need to do this; our mechanism already releases payment information
with sufficiently low probability.
Indeed, we only release payment information relating to an outcome $o$ when $V_o$
is within $M$ of $V_{o^*}$, and the probability that this occurs cannot be much larger than the probability
that the outcome is changed from $o^*$ to $o$.\svc{changed last two sentences}

\begin{lemma}
\label{lem:VCGpriv3}
\begin{eqnarray*}
\lefteqn{\Pr_r[\M^1(\theta_i,\theta_{-i};r) = \M^1(\theta_i',\theta_{-i};r) \wedge \M^2(\theta_i,\theta_{-i};r) \neq \M^2(\theta_i',\theta_{-i};r)]} &&\\
&\leq& 2M e^{\eps/|O|}\cdot \Pr_r[\M^1(\theta_i,\theta_{-i};r)\neq \M^1(\theta_i',\theta_{-i};r)].
\end{eqnarray*}
\svc{replaced $M/(M+1)$ with 1 in exponent}
\end{lemma}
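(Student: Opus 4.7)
The plan is to decompose the bad event $A=\{\M^1(\theta_i;r)=\M^1(\theta_i';r)\text{ and }\M^2(\theta_i;r)\neq\M^2(\theta_i';r)\}$ according to the common chosen outcome $o^*$ and an outcome $o\neq o^*$ witnessing the discrepancy in $\pi$. Since $\pi$ only records outcomes with $V_o\geq V_{o^*}-M$, any such $o$ must lie in $\pi(\theta_i)\cup\pi(\theta_i')$. Writing $B_{o,o^*}=\{\M^1(\theta_i;r)=\M^1(\theta_i';r)=o^*\text{ and }o\text{ witnesses the discrepancy between }\pi(\theta_i)\text{ and }\pi(\theta_i')\}$, a union bound yields $\Pr[A]\leq\sum_{o\neq o^*}\Pr[B_{o,o^*}]$.

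For a fixed pair $(o,o^*)$, I would fix all Laplace components $\lambda_{o'}$ with $o'\neq o$ and reason about the remaining integer variable $\lambda_o$. Writing $V_o(\theta)=\lambda_o+\mu_o(\theta)$ with $\mu_o(\theta)=\sum_j\Uo(\theta_j,o)+o/|O|$, the thresholds $a=V_{o^*}(\theta_i)-\mu_o(\theta_i)$ and $a'=V_{o^*}(\theta_i')-\mu_o(\theta_i')$ satisfy $a-a'=\Delta$, where $\Delta:=[\Uo(\theta_i',o)-\Uo(\theta_i',o^*)]-[\Uo(\theta_i,o)-\Uo(\theta_i,o^*)]\in[-2M,2M]$. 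If $\Delta=0$ then $V_{o^*}-V_o$ is identical under both reports, so $o$ contributes identically to $\pi$ and $B_{o,o^*}$ is empty. If $\Delta\neq 0$, then---conditional on the (necessary) event that $o^*=\argmax_{o'\neq o}V_{o'}(\theta)$ under both reports, which holds automatically whenever $\M^1=o^*$ under both since $o\neq o^*$---$B_{o,o^*}$ corresponds to $\lambda_o$ taking one of exactly $M$ integer values in the interval just below $\min(a,a')$, while the ``flip'' event $\{\M^1(\theta_i),\M^1(\theta_i')\}=\{o,o^*\}$ corresponds to the $|\Delta|$ integer values strictly between $a$ and $a'$. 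A cyclic map from the $M$ bad integers to the $|\Delta|$ flip integers achieves maximum distance $M$ with multiplicity $\lceil M/|\Delta|\rceil\leq M$; combining this with the discrete-Laplace ratio bound $\Pr[\lambda_o=k]/\Pr[\lambda_o=k']\leq\exp(\eps|k-k'|/(M|O|))$ and integrating over $\lambda_{-o}$ gives
\[\Pr[B_{o,o^*}]\;\leq\;M\,e^{\eps/|O|}\cdot\Pr\!\left[\{\M^1(\theta_i;r),\M^1(\theta_i';r)\}=\{o,o^*\}\ \text{and}\ \argmax_{o'\neq o}V_{o'}(\theta)=o^*\text{ for }\theta\in\{\theta_i,\theta_i'\}\right].\]

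Finally, I would sum this per-pair inequality over ordered pairs $(o,o^*)$ with $o\neq o^*$ and observe that each coupled event $\{\M^1(\theta_i;r)=\alpha,\ \M^1(\theta_i';r)=\beta\}$ with $\alpha\neq\beta$ is counted in at most two of the resulting right-hand-side probabilities: at most once via $(o,o^*)=(\beta,\alpha)$---if $\alpha$ remains the $\argmax$ among $o'\neq\beta$ under $\theta_i'$---and at most once via $(o,o^*)=(\alpha,\beta)$---if $\beta$ is the $\argmax$ among $o'\neq\alpha$ under $\theta_i$. Therefore $\sum_{o\neq o^*}\Pr[\ldots]\leq 2\cdot\Pr_r[\M^1(\theta_i;r)\neq\M^1(\theta_i';r)]$, and combined with the per-pair factor this gives the claimed $2M\,e^{\eps/|O|}\cdot\Pr_r[\M^1(\theta_i;r)\neq\M^1(\theta_i';r)]$. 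The main obstacle I expect is the integer-lattice bookkeeping in the second paragraph---verifying that the bad interval contains exactly $M$ integers (correctly handling the fractional tie-breaker $o/|O|$) and exhibiting a round-robin mapping whose maximum bad-to-flip distance is $M$ rather than $M+|\Delta|$, since this distance is what controls the single-step Laplace factor $e^{\eps/|O|}$.
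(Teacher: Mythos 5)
Your proof is correct and follows essentially the same strategy as the paper's: decompose the bad event by the ordered pair $(o^*,o)$ (common selected outcome, discrepant payment outcome), observe that when the relevant utility differences agree the bad event is empty, then for fixed $\lambda_{-o}$ identify the bad values of $\lambda_o$ as an interval of exactly $M$ integers just below $\min(a,a')$, charge their probability against a flip event, and absorb the double-counting over ordered pairs into a factor of $2$. The one place you work harder than necessary is the integer-lattice bookkeeping: rather than constructing a cyclic/round-robin map from the $M$ bad integers onto the $|\Delta|$ flip integers with multiplicity $\lceil M/|\Delta|\rceil$, the paper simply maps every bad integer to the \emph{single} flip integer $\hat{k}=\lceil\min(a,a')\rceil$ sitting immediately above the bad interval. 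Every bad integer is within distance $M$ of $\hat{k}$, which already gives the per-step Laplace factor $e^{\eps/|O|}$, and the count of $M$ bad integers serves as the multiplicity, yielding the same per-pair bound $Me^{\eps/|O|}$ with none of the round-robin verification you flag as the main obstacle. Your cyclic map does give a tighter intermediate per-pair constant $\lceil M/|\Delta|\rceil e^{\eps/|O|}$, but since you relax it to $Me^{\eps/|O|}$ before summing, the final statement is identical.
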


\begin{proof}
First observe that
\begin{align*}
&\Pr_r[\M^1(\theta_i,\theta_{-i};r) = \M^1(\theta_i',\theta_{-i};r) \wedge \M^2(\theta_i,\theta_{-i};r) \neq \M^2(\theta_i',\theta_{-i};r)]\\
&\leq \sum_{o_1 \neq o_2} \Pr_r[\M^1(\theta_i,\theta_{-i};r) = \M^1(\theta_i',\theta_{-i};r)=o_1 \wedge \M^2(\theta_i,\theta_{-i};r) \neq \M^2(\theta_i',\theta_{-i};r) \text{ on } o_2],
\end{align*}
by which we mean that either $(o_2,V_{o_1} -V_{o_2})$ is released in one case but not the other or it is released in both cases but with different values. 

Fix $o_1$ and $o_2$ as above.
If $\Uo(\theta_i,o_1) - \Uo(\theta_i,o_2) = \Uo(\theta_i',o_1) - \Uo(\theta_i',o_2)$, then
$\Pr_r[\M^1(\theta_i,\theta_{-i};r) = \M^1(\theta_i',\theta_{-i};r)=o_1 \wedge \M^2(\theta_i,\theta_{-i};r) \neq \M^2(\theta_i',\theta_{-i};r) \text{ on } o_2] = 0$ because the difference between $V_{o_1}$ and $V_{o_2}$ is not changed by the misreporting.
So assume that $\Uo(\theta_i,o_1) - \Uo(\theta_i,o_2) \neq \Uo(\theta_i',o_1) - \Uo(\theta_i',o_2)$; these values must differ
by at least 1 due to the discreteness assumption.
Fix $\lambda_o= k_o$ for $o \neq o_2$. Denote them as a vector $\lambda_{-o_2}=k_{-o_2}$. Consider some value $k_{o_2}$ such that when
$\lambda_{o_2} =k_{o_2}$ we have
$\M^1(\theta_i,\theta_{-i};(k_{o_2},k_{-o_2})) = \M^1(\theta_i',\theta_{-i};(k_{o_2},k_{-o_2}))=o_1$ and $\M^2(\theta_i,\theta_{-i};(k_{o_2},k_{-o_2})) \neq \M^2(\theta_i',\theta_{-i};(k_{o_2},k_{-o_2})) \text{ on } o_2$. (If there is no such $k_{o_2}$ then the event has probability 0 for this choice of $k_{-o_2}$.) Now consider increasing the value of $\lambda_{o_2}$.
Let $\hat{k}_{o_2}$ be the minimum value such that either
$\M^1(\theta_i,\theta_{-i};(\hat{k}_{o_2},k_{-o_2})) = o_2$
or
 $\M^1(\theta_i',\theta_{-i};(\hat{k}_{o_2},k_{-o_2})) = o_2$.
At the first such value of $\hat{k}_{o_2}$, only one of these two events will happen because
$\Uo(\theta_i,o_1) - \Uo(\theta_i,o_2)$ and $\Uo(\theta_i',o_1) - \Uo(\theta_i',o_2)$ differ by at least 1.\svc{minor edits to previous two sentences}
Moreover, we have $\hat{k}_{o_2} \leq k_{o_2} + M$ because with $\lambda_{o_2} = k_{o_2}$ we have $V_{o_1} - V_{o_2} \leq M$ for either report $\theta_i$ or $\theta'_i$.
Since $\Pr[\lambda_{o_2}=k]\propto \exp(-\eps\cdot |k|/(M\cdot |O|))$, we have
$\Pr[\lambda_{o_2} = k_{o_2}] \leq \exp(\eps/|O|)\cdot \Pr[\lambda_{o_2} = \hat{k}_{o_2}]$.  
Furthermore, there can be at most $M$ such values of $k_{o_2}$.  Thus,
\begin{align*}
&\Pr_r[\lambda_{-o_2}=k_{-o_2} \wedge \M^1(\theta_i,\theta_{-i};r) = \M^1(\theta_i',\theta_{-i};r)=o_1 \wedge \M^2(\theta_i,\theta_{-i};r) \neq \M^2(\theta_i',\theta_{-i};r) \text{ on } o_2] \\
&\leq M e^{\eps/|O|}\Pr_r[\lambda_{-o_2}=k_{-o_2} \wedge \M^1(\theta_i,\theta_{-i};r) \neq \M^1(\theta_i',\theta_{-i};r) \wedge \M^1(\theta_i,\theta_{-i};r) \in \{o_1, o_2\}\\
&\wedge \M^1(\theta'_i,\theta_{-i};r) \in \{o_1, o_2\}]
\end{align*}
Summing over all $o_1 \neq o_2$ and $k_{-o_2}$ gives us the lemma. The factor 2 in the lemma statement is due to the fact that
\begin{align*}
&\hspace{-5mm}\sum_{o_1 \neq o_2, k_{o_2}}\Pr_r[\lambda_{-o_2}=k_{-o_2} \wedge \M^1(\theta_i,\theta_{-i};r) \neq \M^1(\theta_i',\theta_{-i};r) \wedge \M^1(\theta_i,\theta_{-i};r) \in \{o_1, o_2\} \wedge \M^1(\theta'_i,\theta_{-i};r) \in \{o_1, o_2\}]\\
& = 2 \Pr_r [\M^1(\theta_i,\theta_{-i};r) \neq \M^1(\theta_i',\theta_{-i};r)].
\end{align*}
\end{proof}

Combining Lemmas \ref{lem:VCGout} and \ref{lem:VCGpriv3}, we have
\begin{align*}
\SD(\M(\theta_i, \theta_{-i}), \M(\theta'_i, \theta_{-i}))
&\leq |O|\cdot (1+ 2M e^{\eps/|O|}) \cdot (
\Exp[\Uo(\theta_i, \M^1(\theta_i,\theta_{-i}))-P(\theta_i,\M(\theta_i,\theta_{-i}))]\\
&-
\Exp[\Uo(\theta_i, \M^1(\theta_i',\theta_{-i}))-P(\theta_i',\M(\theta_i,\theta_{-i}))]).
\end{align*}
Applying Lemma~\ref{lem:SD} gives us our theorem.

\begin{theorem} \label{thm:vcg}
Mechanism~\ref{mech:vcg} is truthful in expectation and individually rational
for player $i$ provided that, for some function $F_i$:
\begin{squishenum}
\item Player $i$'s privacy utility $\Up[i]$ satisfies Assumption~\ref{ass:generic-privacy} with privacy bound function $F_i$, and
\item $2F_i(e^{\eps})\cdot |O|\cdot (1+ 2M e^{\eps/|O|}) \leq 1$.
\end{squishenum}
In particular, if all players have the same privacy bound function $F_i=F$, it suffices to take $\eps$ to be a sufficiently small
constant depending only on $M$ and $|O|$ (and not the number $n$ of players).\svc{new sentence}
\end{theorem}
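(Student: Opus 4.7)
The plan is to assemble the three inequalities that the discussion preceding the theorem has already set up, using the decomposition $U_i = \Uo - P + \Up$. Since the hypothesis $2F_i(e^\eps)\,|O|\,(1+2Me^{\eps/|O|}) \leq 1$ is designed to be the product of the constants coming from Lemma~\ref{lem:SD} (bounding privacy-utility difference by statistical distance) and from the outcome+payment statistical-distance bound (Lemmas~\ref{lem:VCGout} and~\ref{lem:VCGpriv3}), I would just chain them and check that the privacy-utility loss is absorbed by the outcome-plus-payment gain guaranteed by standard VCG.

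Concretely, fix player $i$, type $\theta_i$, alternative report $\theta_i'$, and profile $\theta_{-i}$, and let $\Delta$ denote the expected gain in outcome-utility-minus-payment from reporting $\theta_i$ instead of $\theta_i'$. Lemma~\ref{lem:VCGpriv1} tells us $\M$ is $\eps$-differentially private, so Lemma~\ref{lem:SD} gives
\[
\bigl|\Exp[\Up[i](\theta_i, \M(\theta_i,\theta_{-i}),\M,\theta_{-i})] - \Exp[\Up[i](\theta_i, \M(\theta_i',\theta_{-i}),\M,\theta_{-i})]\bigr| \leq 2F_i(e^\eps)\cdot \SD(\M(\theta_i,\theta_{-i}),\M(\theta_i',\theta_{-i})).
\]
Combining Lemmas~\ref{lem:VCGout} and~\ref{lem:VCGpriv3} (via the union-bound split of statistical distance by whether the outcome or only the payment information changes, which appears in the display immediately before the theorem) yields $\SD(\M(\theta_i,\theta_{-i}),\M(\theta_i',\theta_{-i})) \leq |O|\,(1+2Me^{\eps/|O|})\,\Delta$. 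Chaining and invoking the hypothesis shows the expected privacy-utility loss is at most $\Delta$. Because the proof of Lemma~\ref{lem:VCGout} already inherits standard VCG truthfulness (by treating the noise $\lambda_o + o/|O|$ as an extra fictitious player), we have $\Delta \geq 0$, so the total expected utility under truthful reporting dominates, which is exactly truthfulness in expectation.

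For individual rationality I would simply invoke the truthfulness conclusion with the alternative report $\theta_i' = \bot$: since $\bot \in \Theta$ is a valid report modelling non-participation, a type-$\theta_i$ player weakly prefers truthful participation to opting out. The real conceptual obstacle in this theorem is already fully isolated in Lemma~\ref{lem:VCGpriv3}: the publicly released payment information $\pi$ could in principle contribute far more to the statistical distance than the chosen outcome alone, and it is only because the mechanism releases $V_{o^*}-V_o$ solely when this gap is at most $M$ that the payment-induced contribution stays within a bounded multiple of the outcome-induced contribution. Once this is in hand, the present proof is a routine chaining of inequalities together with the standard VCG slack (of at least $1/|O|$, coming from the tie-breaking increment $o/|O|$).
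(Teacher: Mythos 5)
Your proof is correct and follows essentially the same route as the paper: chain Lemma~\ref{lem:VCGpriv1} (differential privacy), Lemma~\ref{lem:SD} (expected privacy-utility difference bounded by $2F_i(e^\eps)$ times statistical distance), and the combination of Lemmas~\ref{lem:VCGout} and~\ref{lem:VCGpriv3} (statistical distance bounded by $|O|(1+2Me^{\eps/|O|})\,\Delta$), then invoke the hypothesis so that the expected privacy loss is dominated by the expected gain $\Delta$ in outcome-utility-minus-payment, and handle individual rationality by applying truthfulness with the alternative report $\bot\in\Theta$. The remark that $\Delta\geq 0$ is redundant (the chaining already gives $\Delta - |\text{privacy loss}| \geq 0$), but harmless, and your identification of Lemma~\ref{lem:VCGpriv3} as the conceptual crux matches the paper's exposition.
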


Truthfulness in expectation  relies on players being risk
neutral in terms of their privacy utility so that it is acceptable that with some low probability,
the privacy costs are larger than their utility from the outcome.
An alternative approach that does not rely on risk neutrality
is to switch from the VCG mechanism to the Expected
Externality mechanism.  This is a variant on VCG that, rather than
charging players the actual externality they impose as
in Equation~\eqref{eqn:VCG}, charges them their expected externality
\begin{equation}
\label{eqn:EE}
E_{\theta \sim p}\left[\sum_{j \neq i} \Uo(\theta_j,o_{-i}) - \Uo(\theta_j,o^*)\right],
\end{equation}
where $p$ is a prior distribution over $\Theta^n$, $o_{-i}$ is the outcome that maximizes the sum of outcome utilities of players other than $i$, and $o^*$ is the outcome that maximizes the sum of outcome utilities when $i$ is included.
Essentially, $i$ is charged the expected amount he would have to pay under VCG given the prior over types.
Since the amount
players are charged is independent of the actual reports of others,
collecting payments has no privacy implications.  (The proof of Lemma~\ref{lem:VCGout} shows that
if we only consider the privacy cost of the outcome, then we have universal truthfulness.)
However, the use of
a prior means that the truthfulness guarantee only holds in a
Bayes-Nash equilibrium.  On the other hand, this mechanism does have
other nice properties such as being adaptable to guarantee budget
balance.

Finally, we show that Mechanism~\ref{mech:vcg} approximately preserves VCG's efficiency.

\begin{proposition} \label{prop:vcg-efficiency}
For every profile of types $\theta\in \Theta^n$, if we choose $o\leftarrow \M(\theta)$ using Mechanism~\ref{mech:vcg}, then we have:
\begin{squishenum}
\item
$\Pr\left[\sum_i \Uo[i](\theta_i,o) < \max_{o'}\left(\sum_i \Uo[i](\theta_i,o')\right)-\Delta\right] \leq 2|O| \cdot e^{-\eps \Delta/(2M\cdot |O|)},$
\item
$\Exp\left[\sum_i \Uo[i](\theta_i,o)\right] \geq \max_{o'}\left(\sum_i \Uo[i](\theta_i,o')\right) - O(|O|^2\cdot M/\eps).$
\end{squishenum}
\end{proposition}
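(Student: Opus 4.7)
The plan is to compare the mechanism's chosen outcome $\hat o$ to the true social-welfare optimum $o^*=\argmax_o W_o$, where $W_o=\sum_i \Uo[i](\theta_i,o)$, and to show that the gap $W_{o^*}-W_{\hat o}$ is controlled by the magnitudes of the noise variables $\lambda_o$. Since Mechanism~\ref{mech:vcg} chooses $\hat o=\argmax_o V_o$ with $V_o=W_o+\lambda_o+o/|O|$, the defining inequality $V_{\hat o}\ge V_{o^*}$ rearranges to
$$W_{o^*}-W_{\hat o}\;\le\;\lambda_{\hat o}-\lambda_{o^*}+\frac{\hat o-o^*}{|O|}\;\le\;\lambda_{\hat o}-\lambda_{o^*}+1,$$
using $(\hat o-o^*)/|O|\in[-1,1]$. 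This single observation drives both parts of the proposition.

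For part (1), whenever $W_{o^*}-W_{\hat o}>\Delta$ the displayed inequality forces $\lambda_{\hat o}-\lambda_{o^*}>\Delta-1$, so there is some $o\neq o^*$ for which $\lambda_o-\lambda_{o^*}>\Delta-1$. I would union bound over $o\neq o^*$ and use the standard discrete-Laplace tail $\Pr[|\lambda_o|\ge t]\le 2\exp(-\eps t/(M|O|))$ together with $\Pr[\lambda_o-\lambda_{o^*}>s]\le \Pr[\lambda_o>s/2]+\Pr[-\lambda_{o^*}>s/2]$, to obtain a bound of the form $2|O|\cdot e^{-\eps\Delta/(2M|O|)}$ (absorbing absolute constants and the $-1$ shift by a slight adjustment of the split, which costs nothing in the stated form).

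For part (2), take expectations in the key inequality and bound $\lambda_{\hat o}-\lambda_{o^*}\le 2\max_o|\lambda_o|$, which gives $\Exp[W_{o^*}-W_{\hat o}]\le 2\Exp[\max_o|\lambda_o|]+1\le 2\sum_o\Exp[|\lambda_o|]+1$. Each $\lambda_o$ is discrete Laplace with scale parameter $\eps/(M|O|)$, so $\Exp[|\lambda_o|]=O(M|O|/\eps)$; summing over the $|O|$ outcomes yields the claimed $O(|O|^2 M/\eps)$. The only subtlety, and not a real obstacle, is that $\hat o$ is random and correlated with the $\lambda_o$'s, so one cannot reason directly about ``the'' noise at $\hat o$; both parts sidestep this by replacing $\hat o$ by a union bound (part 1) or a worst case over all of $O$ (part 2), at the cost of a factor of $|O|$, which matches the $|O|$-dependence appearing in both bounds.
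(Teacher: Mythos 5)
Your proof is correct and follows essentially the same route as the paper's: the key inequality $W_{o^*}-W_{\hat o}\le \lambda_{\hat o}-\lambda_{o^*}+1$ (the paper writes it with the roles of $o^*$ and $o^{**}$ named the other way), the split $\Pr[\lambda_o-\lambda_{o^*}\ge\Delta]\le\Pr[\lambda_o\ge\Delta/2]+\Pr[\lambda_{o^*}\le-\Delta/2]$ with a union bound over $o$ for part (1), and bounding $\Exp[\max_o(\lambda_o-\lambda_{o^*})]$ by $\sum_o\Exp|\lambda_o|=O(|O|^2M/\eps)$ for part (2). The paper is equally informal about the additive $-1$ slack in the tail bound, so your handling matches.
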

\svci{corrected and expanded proposition and added more details to proof}

\begin{proof}
Let $o^{**}= \argmax_o \Uo[j](\theta_j,o)$. For the output $o^*$ of Mechanism~\ref{mech:vcg}, we have:
\begin{eqnarray*}
\sum_j \Uo[j](\theta_j,o^*)
&=& V_{o^*} - \lambda_{o^*}-o^*/|O|\\
&\geq& V_{o^{**}} - \lambda_{o^*}-o^*/|O|\\
&=& \left(\max_o \Uo[j](\theta_j,o)\right)+\lambda_{o^{**}}+o^{**}/|O|-\lambda_{o^*}-o^*/|O|\\
&>& \left(\max_o \Uo[j](\theta_j,o)\right) - \max_{o} (\lambda_o-\lambda_{o^{**}}) - 1.
\end{eqnarray*}
So we are left with bounding $\max_{o} (\lambda_o-\lambda_{o^{**}})$ for random variables $\lambda_o$ such that
$\Pr[\lambda_o=k]\propto \exp(-\eps\cdot |k|/(M\cdot |O|))$.
For each $o$,
\begin{eqnarray*}
\Pr[\lambda_o-\lambda_{o^{**}}\geq \Delta]
\leq \Pr[\lambda_o \geq \Delta/2]+\Pr[\lambda_{o^**}\leq -\Delta/2]
\leq 2\exp(-\eps \Delta/(2M\cdot |O|)).
\end{eqnarray*}
Taking a union bound over the choices for $o$ completes the high probability bound.
For the expectation, we have:
\begin{equation*}
\Exp[\max_{o} (\lambda_o-\lambda_{o^{**}})]
\leq \Exp\left[\sum_o|\lambda_o|\right]
= |O|\cdot O\left(M\cdot |O|/\eps\right).
\end{equation*}
\end{proof}

Thus, the social welfare is within $\tilde{O}(|O|^2)\cdot M/\eps$ of optimal, both in
expectation and with high probability.
Like with Proposition~\ref{prop:voting-efficiency}, these bounds are independent of the number $n$ of participants, so
we obtain asymptotically optimal social welfare as $n\rightarrow \infty$.  Also like the discussion after Proposition~\ref{prop:voting-efficiency}, by taking
$\eps=\eps(n)$ to be such that $\eps=o(1)$ and $\eps=\omega(1/n)$ (e.g., $\eps=1/\sqrt{n}$), the sum of privacy utilities is also a vanishing fraction of $n$ (for participants satisfying Assumption~\ref{ass:generic-privacy} with a common privacy bound function $F$).

\Version{$ $Id$ $}
\section{Discussion} \label{sec:discussion}

We conclude by discussing a Bayesian interpretation of our privacy model and several of the model's limitations.




Our modelling of privacy in Section~\ref{sec:modelprivacy} is motivated in part by viewing privacy as a concern
about {\em other's beliefs about you}.  Fix a randomized mechanism $\M : \Theta^n \times \Rand \rightarrow O$, a player $i\in [n]$, and a profile  $\theta_{-i}\in \Theta^{n-1}$ of other player's reports.  Suppose that an adversary has a prior $T_i$ on
the type of player $i$, as well as a prior $S_i$ on the strategy $\sigma : \Theta\rightarrow \Theta$ played by player $i$.
Then upon seeing an outcome $o$ from the mechanism, the adversary should replace $T_i$ with a posterior $T'_i$
computed according to Bayes' Rule as follows: 
\begin{eqnarray*}
\Pr[T'_i=\theta_i] &=&  \Pr[T_i=\theta_i | \M(S_i(T_i),\theta_{-i})=o]\\
&=& \Pr[T_i=\theta_i] \cdot \frac{\Pr[\M(S_i(T_i),\theta_{-i})=o | T_i=\theta_i]}{\Pr[\M(S_i(T_i),\theta_{-i})=o]}.
\end{eqnarray*}
Thus if we set
$x=\max_{\theta',\theta''\in\Theta} (\pr{\M(\theta',\theta_{-i})=o}/\pr{\M(\theta'',\theta_{-i})=o})$
(the argument of $F_i$ in Assumption~\ref{ass:generic-privacy}), then we have
$$x^{-1}\cdot \Pr[T_i=\theta_i] \leq \Pr[T'_i=\theta_i] \leq x\cdot \Pr[T_i=\theta_i].$$
So if $x$ is close to 1, then the posterior $T_i'$ is close to the prior $T_i$, having the same probability mass functions within a factor of $x$, and consequently having statistical difference at most $x-1$.
Thus, Assumption~\ref{ass:generic-privacy} can be justified by asserting that ``if an adversary's beliefs about player $i$
do not change much, then it has a minimal impact on player $i$'s privacy utility.'' One way to think of this is that player $i$ has some smooth value function of the adversary's beliefs about her, and her privacy utility is the difference of the value function after and before the Bayesian updating.
%
This
reasoning follows the lines of Bayesian interpretations of differential privacy due to Dwork and McSherry, and described in \cite{KasiviswanathanSm08}.

This Bayesian modelling also explains why we do not include the strategy played by $i$ in the privacy utility function $\Up[i]$. How a Bayesian adversary updates its beliefs about player $i$ based on the outcome do not depend on the actual strategy played by $i$, but rather on the adversary's beliefs about that strategy, denoted by $S_i$ in the above discussion.  Given that our mechanisms are truthful, it is most natural to consider $S_i$ as the truthful strategy (i.e., the identity function).  If the Bayesian adversary values possessing a correct belief about players, this is analogous to a notion of equilibrium.  If we treat the adversary as another player then if the players report truthfully and the adversary assumes the players report truthfully each is responding optimally to the other. However, if
player $i$ can successfully convince the adversary that she will follow some other strategy $S_i$, then this can be implicitly  taken into
account in $\Up[i]$.  (But if player $i$ further deviates from $S_i$, this should not be taken into account, since
the adversary's beliefs will be updated according to $S_i$.)

Our modelling of privacy in terms of other's beliefs is subject to several (reasonable) critiques:
\squishlist
\item Sometimes a small, continuous change in beliefs can result in discrete choices that have a large impact in someone's life.  For example, consider a ranking of potential employees to hire, students to admit, or suitors to marry---a small change in beliefs about a candidate may cause them to drop one place in a ranking, and thereby not get hired, admitted, or married.
On the other hand, the candidate typically
does not know exactly where such a threshold is and so from their perspective the small change in
beliefs could be viewed as causing a small change in the probability of rejection.
\item Like in differential privacy, we only consider an adversary's beliefs about player $i$ {\em given the rest of the database}.  (This is implicit in us considering a fixed $\theta_{-i}$ in Assumption~\ref{ass:generic-privacy}.)  If
    an adversary believes that player $i$'s type is correlated with the other players (e.g., given by a joint prior $T$ on $\Theta^n$), then conditioning on $T_{-i}=\theta_{-i}$ may already dramatically change the adversary's beliefs about player $i$.  For example, if the adversary knew that all $n$ voters in a given precinct prefer the same candidate (but don't know which candidate that is), then conditioning on $\theta_{-i}$ tells the adversary who player $i$ prefers. We don't measure the (dis)utility for leaking this kind of information.  Indeed, the differentially private election mechanism of Theorem~\ref{thm:voting} will leak the preferred candidate in this example (with high probability).
\item The word ``privacy'' is used in many other ways.  Instead of being concerned about other's beliefs, one may be concerned about self-representation (e.g., the effect that reporting a given type may have on one's self-image).
\squishend

\ignore{
\svci{Since this is just motivational, and not actually used in any of our results, I don't know if it's important to include the formalization of the Bayesian interpretation in terms of a Lipschitz value function on beliefs, etc.  Is the above material enough, or do we also want to keep what's below?  (I'd be inclined to drop the material below.)}
\yci{Removed.}

Although there may be other ways to formalize our intuition about privacy utility functions, this particular formulation arises if we think of privacy as as a function of the Bayesian beliefs held by others about a user's type. In more detail, we consider the following natural model:
\begin{enumerate}
 \item Each user cares about the beliefs of some specific third party (who can be different for every user). We think of the user as having a ``privacy value'' $\Vp : \mathcal{P} \ar \mathbb{R}$, where $\mathcal{P}$ is the class of distributions over the user's type (this represents the prior held by the third party).
 \item The third party updates its prior using Bayes' rule after seeing the outcome of the mechanism.
 \item The user's privacy value obeys a Lipschitz condition: small changes in the third party's prior cannot cause large changes in the user's privacy value. More formally, we require that there exist a constant $K$ such that:
\begin{equation}
\label{ass:lipschitz-bayesian-privacy}
\forall P,P'\in\mathcal{P} : |\Vp(P)-\Vp(P')|\le K \cdot \MD(P,P') \ ,
\end{equation}
where $\MD(P,P') = \max_o \abs{\ln
\frac{\pr[P]{\theta = x}}{\pr[P']{\theta = x}}}$ is a measure of the
distance between two distributions.
\end{enumerate}

Denote $P^{o,\M}$ the distribution $P$ after a Bayesian update given the mechanism \M output $o$ (i.e., for every $x\in\Theta$, $\pr[P^{o,\M}]{\theta=x}=\pr[P]{\theta=x \mid \M(\theta)=o}$).
\begin{claim}
If $V$ satisfies Assumption~\ref{ass:lipschitz-bayesian-privacy}, then
the utility function
\[
  \Up[i](o,\M) \defeq \Up(o,\M,P) \defeq \Vp(P^{o,\M}) - \Vp(P)
\]
satisfies Assumption~\ref{ass:generic-privacy} for all priors $P$.
\end{claim}
\begin{proof}
\begin{align*}
\abs{\Up[i](o,\M)} &= \abs{\Vp(P^{o,\M}) - \Vp(P)}\\
&\le K \cdot MD(P^{o,\M},P)\\
&= K \cdot \max_x \abs{\ln \frac{\pr[P^{o,\M}]{\theta = x}}{\pr[P]{\theta = x}}}\\
&= K \cdot \max_x \abs{\ln \frac{\pr[P]{\theta = x \mid \M(\theta)=o}}
{\pr[P]{\theta = x}}}\\
&= K \cdot \max_x \abs{\ln \frac{\frac{\pr{\M(\theta)=o \mid \theta = x}
\pr[P]{\theta = x}}{\pr[P]{\M(\theta) = o}}}{\pr[P]{\theta = x}}}\\
&= K \cdot \max_x \abs{\ln \frac{\pr{\M(x)=o}}{\pr[P]{\M(\theta) = o}}}\\
&\le K \cdot \max_{\theta',\theta'' \in \Theta}
\abs{\ln \frac{\pr{\M(\theta')=o}}{\pr{\M(\theta'') = o}}}
\end{align*}
\end{proof}

To apply this definition more broadly, we need to incorporate the
reports of the other players into it.  In general, a mechanism is a
function $\M: \Theta^n \ar \Delta(O)$ that maps from the reported
types of all players a distribution over outcomes.  Given the reports
$\theta_{-i}$ of other players, we can define
$\M_{\theta_{-i}}(\theta) = \M(\theta,\theta_{-i})$ to reduce to the
single player setting.  In doing so, we are implicitly making the
worst case assumption that the third party has perfect knowledge of
the reports of the other players.  If the third party has less
information, the change in his posterior (and thus the change in the
player's utility) will only be smaller.  This bound on the privacy
utility for nearby distributions implies a bound for differentially
private mechanisms.
}

\section*{Acknowledgments}\svc{please add/edit}
This work was inspired by discussions under the Harvard Center Research for Computation and Society's ``Data Marketplace'' project.  We are grateful to the
other participants in those meetings, including Scott Kominers, David Parkes, Felix Fischer, Ariel Procaccia, Aaron Roth, Latanya Sweeney, and Jon Ullman.  We also thank
Moshe Babaioff and Dave Xiao for helpful discussions and comments.


\appendix
\Version{$ $Id$ $}

\DeclareRobustCommand{\secfacility}{\ref{sec:facility}}
\DeclareRobustCommand{\secvcg}{\ref{sec:vcg}}
\DeclareRobustCommand{\xiaofoot}{\footnote{Subsequent to our work, Xiao has revised his model to use a different, prior-free measure of privacy.  This appendix provides a comparison to his original formulation.}}

\section{Comparison to Xiao's Privacy Measure\xiaofoot} \label{sec:xiao}


Xiao~\cite{Xiao11} measures privacy cost as being proportional to the mutual information between a player's type and the outcome of the mechanism,
where the {\em mutual information} between two jointly distributed random variables $X$ and $Y$ is defined to be
$$I(X;Y)=H(X)+H(Y)-H(X,Y)=\Exp_{(x,y)\sim (X,Y)}\left[\log \frac{\Pr[(X,Y)=(x,y)]}{\Pr[X=x]\cdot \Pr[Y=y]}\right],$$
where $H(Z)=\Exp_{z\sim Z}[\log(1/\Pr[Z=z])]$ is Shannon entropy.  In order for the mutual information to make sense, Xiao assumes
a prior $T_i$ on a player's type and the privacy cost also depends on the strategy $\sigma_i : \Theta\rightarrow \Theta$ played by player $i$.  Accordingly
his measure of outcome utility also takes an expectation over the same prior $T_i$, resulting in the following definition.

\begin{definition}
Let $\Theta$ be a type space, $O$ an outcome space, $\Uo : \Theta\times O\rightarrow \R$ an outcome-utility function, and let
$\nu_i\geq 0$ be a measure of player $i$'s value for privacy, and let $T_i$ be a prior on player $i$'s type.  Then a randomized mechanism $\M : \Theta^n\times \Rand\rightarrow O$ is
{\em Xiao-truthful} for player $i$ if for all strategies $\sigma_i : \Theta\rightarrow \Theta$, and all profiles $\theta_{-i}$ of
reports for the other players, we have:
$$\Exp[\Uo(T_i,\M(T_i,\theta_{-i}))]-\nu_i \cdot I(T_i;\M(T_i,\theta_{-i})) \geq
\Exp[\Uo(T_i,\M(\sigma_i(T_i),\theta_{-i}))]-\nu_i \cdot I(T_i;\M(\sigma_i(T_i),\theta_{-i})),$$
where the expectations and mutual information are taken both over $T_i$ and the random choices of $\M$.
\end{definition}

While mutual information is a natural first choice for measuring privacy, it has several disadvantages compared to our modelling:
\begin{itemize}
\item It treats all bits of information the same, whereas clearly one may have different concerns for different aspects of one's private type.  For example,
one may be a lot more sensitive about the high-order bits of one's salary than the low-order bits.
\item It forces us to consider a prior on a player's type and take expected utility over that prior.  Contrast this with the Bayesian interpretation of
our privacy modelling described in Section~\ref{sec:discussion}.  There the prior $T_i$ is only an adversary's beliefs about player $i$'s type, which may be completely incorrect.  Player $i$'s utility is computed with respect to his fixed, actual type $\theta_i$.
\end{itemize}

As mentioned earlier, Xiao's modelling is not a special case of ours, particularly because his modelling of privacy depends on the actual strategy $\sigma_i$ followed by player $i$.  Nevertheless, we can show that truthfulness with respect to our definitions implies truthfulness with respect to his:

\begin{theorem} \label{thm:implyXiao}
If $\M$ is truthful in expectation for player $i$ with respect to the privacy utility function
$$\Up[i](\theta_i,o,\M,\theta_{-i}) = -\nu_i \cdot \log\frac{\Pr[\M(\theta_i,\theta_{-i})=o]}
{\Pr[\M(T_i,\theta_{-i})=o]},$$
then
$\M$ is Xiao-truthful for player $i$ with prior $T_i$.
\end{theorem}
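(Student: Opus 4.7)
The plan is to derive Xiao-truthfulness by averaging the hypothesized truthfulness-in-expectation inequality over $\theta_i \sim T_i$. First, I would apply our assumption pointwise with alternative report $\theta_i' = \sigma_i(\theta_i)$, yielding, for each fixed $\theta_i$, an inequality comparing expected $\Uo + \Up[i]$ for truth-telling versus deviation. Taking expectation over $\theta_i \sim T_i$, the outcome-utility terms on each side become exactly $\Exp[\Uo(T_i, \M(T_i, \theta_{-i}))]$ and $\Exp[\Uo(T_i, \M(\sigma_i(T_i), \theta_{-i}))]$, which match Xiao's outcome terms. So everything hinges on comparing the averaged $\Up[i]$ terms to the mutual-information terms $-\nu_i \cdot I(T_i; \M(\cdot, \theta_{-i}))$.

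Abbreviating $p(o \mid \theta) = \Pr[\M(\theta, \theta_{-i}) = o]$ and $q(o) = \Pr[\M(T_i, \theta_{-i}) = o]$, substituting the specific form of $\Up[i]$ gives, on the truth-telling side, $-\nu_i \Exp_{\theta_i \sim T_i,\, o \sim \M(\theta_i,\theta_{-i})}[\log(p(o\mid\theta_i)/q(o))]$, which by the definition of mutual information equals $-\nu_i \cdot I(T_i; \M(T_i, \theta_{-i}))$; thus Xiao's truth-telling side is reproduced exactly. On the deviation side, the corresponding expression is $-\nu_i \Exp_{\theta_i,\, o \sim \M(\sigma_i(\theta_i),\theta_{-i})}[\log(p(o\mid\theta_i)/q(o))]$, which differs from $-\nu_i \cdot I(T_i; \M(\sigma_i(T_i), \theta_{-i}))$ because the log ratio is measured against the ``wrong'' conditional and the ``wrong'' marginal. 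To conclude Xiao-truthfulness it suffices to show this expression is at least $-\nu_i \cdot I(T_i; \M(\sigma_i(T_i), \theta_{-i}))$, and after cancellation this reduces to
\begin{equation*}
\KL(q_\sigma \,\|\, q) \;\le\; \Exp_{\theta_i \sim T_i}\!\left[\KL\bigl(p(\cdot \mid \sigma_i(\theta_i)) \,\|\, p(\cdot \mid \theta_i)\bigr)\right],
\end{equation*}
where $q_\sigma(o) = \Pr[\M(\sigma_i(T_i), \theta_{-i}) = o]$.

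This final inequality is the main substantive step and is simply the data processing inequality (equivalently, joint convexity of KL divergence). Concretely, I would form two distributions $A$ and $B$ on $\Theta \times O$ with common $\Theta$-marginal $T_i$ and conditionals $p(\cdot \mid \sigma_i(\theta))$ and $p(\cdot \mid \theta)$; by the chain rule $\KL(A \,\|\, B)$ equals the right-hand side, and marginalizing out $\theta$ leaves $q_\sigma$ and $q$ on $O$ while only decreasing KL divergence, giving the desired bound. The main obstacle is really just careful bookkeeping to track which conditional and which marginal appear where; the essential fact is that passing from likelihood ratios relative to reports-via-$\sigma_i$ to likelihood ratios relative to true types can only increase the expected log ratio, and data processing captures this cleanly.
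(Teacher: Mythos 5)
Your proof is correct and is essentially the same argument as the paper's. The paper also computes the truth-telling side's averaged $\Up[i]$ exactly as $-\nu_i \cdot I(T_i;\M(T_i,\theta_{-i}))$, and its key step for the deviation side is the same inequality $\KL(W,X\|W,Y)\ge\KL(X\|Y)$ (which, with $W=T_i$, $X=\M(\sigma_i(T_i),\theta_{-i})$, $Y=\M(T_i,\theta_{-i})$, is exactly your $\KL(q_\sigma\|q)\le\Exp_{\theta_i}[\KL(p(\cdot\mid\sigma_i(\theta_i))\|p(\cdot\mid\theta_i))]$ after applying the chain rule), though the paper organizes the algebra via an add-and-subtract of KL terms rather than your direct cancellation.
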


We note that the privacy utility function in Theorem~\ref{thm:implyXiao} satisfies Assumption~\ref{ass:generic-privacy}
with $F_i(x) = \nu_i\cdot \log(x)$, and hence all of our truthful mechanisms are also Xiao-truthful.

\begin{proof}
First note that, by Bayes' Rule,
\begin{equation}
\label{eqn:XiaoUp}
\Up[i](\theta_i,o,\M,\theta_{-i}) = -\nu_i \cdot \log\frac{\Pr[\M(T_i,\theta_{-i})=o | T_i=\theta_i]}
{\Pr[\M(T_i,\theta_{-i})=o]}
= -\nu_i \cdot \log\frac{\Pr[(T_i,\M(T_i,\theta_{-i}))=(\theta_i,o)]}
{\Pr[T_i=\theta_i]\cdot \Pr[\M(T_i,\theta_{-i})=o]}.
\end{equation}
Thus,
\begin{equation}
\label{eqn:Ihonest}
-\nu_i \cdot I(T_i;\M(T_i,\theta_{-i})) =
\Exp\left[\Up[i](T_i,\M(T_i,\theta_{-i}),\M,\theta_{-i})\right].
\end{equation}
To relate the mutual information under strategy $\sigma_i$ to $\Up[i]$, we use
the notion of {\em KL divergence} between two random variables $X$ and $Y$, which is defined as
$$\KL(X||Y) = \Exp_{x\sim X}\left[\log \frac{\Pr[X=x]}{\Pr[Y=y]}\right].$$
We will use the fact that for a random variable $W$ jointly distributed with $X$ and $Y$, we have
$\KL(W,X||W,Y)\geq \KL(X||Y)$.  (This follows from the Log-Sum Inequality~\cite{CoverTh91}.)
Taking $W=T_i$, $X=\M(\sigma_i(T_i),\theta_{-i})$, and $Y=\M(T_i,\theta_{-i})$, we have
\begin{eqnarray*}
\lefteqn{I(T_i;\M(\sigma_i(T_i),\theta_{-i}))} &&\\
&\geq& I(T_i;\M(\sigma_i(T_i),\theta_{-i}))-\KL(T_i,\M(\sigma_i(T_i))||T_i,\M(T_i))+\KL(\M(\sigma_i(T_i))||\M(T_i))\\
&=& \Exp_{(\theta_i,o)\sim (T_i,\M(\sigma_i(T_i),\theta_{-i}))}\left[\log\frac{\Pr[(T_i,\M(T_i,\theta_{-i}))=(\theta_i,o)]}
{\Pr[T_i=\theta_i]\cdot \Pr[\M(T_i,\theta_{-i})=o]}\right].
\end{eqnarray*}
Combining this with Equation~(\ref{eqn:XiaoUp}), we have:
\begin{equation}
\label{eqn:Idishonest}
-\nu_i \cdot I(T_i;\M(\sigma_i(T_i),\theta_{-i})) \leq \Exp\left[\Up[i](T_i,o,\M(\sigma_i(T_i),\theta_{-i})\right].
\end{equation}

By truthfulness in expectation with respect to $\Up[i]$, we have
\begin{eqnarray} \label{eqn:TiE}
\lefteqn{\Exp[\Uo(T_i,\M(T_i,\theta_{-i}))]+\Exp\left[\Up[i](T_i,\M(T_i,\theta_{-i}),\M,\theta_{-i})\right]} && \\
&\geq& \Exp[\Uo(T_i,\M(\sigma_i(T_i),\theta_{-i}))]+\Exp\left[\Up[i](T_i,o,\M(\sigma_i(T_i),\theta_{-i},\theta_{-i})\right] \notag
\end{eqnarray}
Combining Inequalities~(\ref{eqn:Ihonest}), (\ref{eqn:Idishonest}), and (\ref{eqn:TiE}) completes the proof.
\end{proof}

\bibliographystyle{plain}
\bibliography{biblio}

\end{document}